\documentclass[12pt]{article}
\usepackage{amsmath,amssymb,amsthm,mathtools}
\usepackage{amsfonts}
\usepackage[nohead]{geometry}
\usepackage[singlespacing]{setspace}
\usepackage[bottom]{footmisc}
\usepackage{indentfirst}
\usepackage{endnotes}
\usepackage{graphicx}
\usepackage{rotating}
\usepackage{enumitem}
\usepackage{microtype}
\usepackage{color} 
\newtheorem{theorem}{Theorem}

\newtheorem{claim}{Claim}

\newtheorem{definition}{Definition}

\newtheorem{lemma}{Lemma}
\newtheorem{proposition}{Proposition}
\newtheorem{remark}{Remark}
\newcommand{\argmax}{\mathop{\rm arg~max}\limits}

\newcommand{\qedo}{\hfill \rule{5pt}{10pt} \par}
\makeatletter
\def\@biblabel#1{\hspace*{-\labelsep}}
\makeatother
\usepackage{natbib}
\bibpunct[:]{(}{)}{,}{a}{}{,}

\newcommand\cites[1]{\citeauthor{#1}'s\ (\citeyear{#1})}

\allowdisplaybreaks[1]

\newcommand{\R}{\mathbb{R}}

\newcommand{\dist}{\operatorname{dist}}

\newcommand{\1}{\mathbf{1}}

\begin{document}
\title{Constrained optimal transport with applications to large matching markets\footnote{The author is grateful to Michihiro Kandori, Fuhito Kojima, Kenji Tsukada, Reo Nonaka, Yuki Tsutsui, Atsushi Iwasaki, Alex Teytelboym, and Ravi Jagadeesan, as well as to seminar participants at the Tokyo Conference on Market Design 2025 and SWET2025, for their valuable comments and suggestions. 
This work was supported by Japan Science and Technology Agency ERATO Grant Number JPMJER2301, Japan.}}
\author{Koji Yokote\footnote{College of Economics, Aoyama Gakuin University, Tokyo 150-8366, Japan. Email: koji.yokote@gmail.com}}
\maketitle

%%%%%%%Note to self
%%%1: the symbol of ``transpose'' (tenchi) is used in some places but not in other places. Let's be consistent. 
%%%2: It seems better to add some preamble before each theorem. 
%%%3 technical temrs are in bald font in some places but not in others. Let's be consistent. 

\begin{abstract}
We establish a variant of Monge--Kantorovich duality for matching problems with a continuum of agents, a finite set of alternatives, and general linear constraints. Under primal feasibility alone, both the primal and dual problems attain their optimal values, and the values coincide.
We develop two applications. First, in a large market with indivisible goods, equilibrium prices are characterized as minimizers of a finite-dimensional convex potential function, and a t{\^a}tonnement-style subgradient process converges to equilibrium prices even in the presence of complementarities. Second, in a transferable-utility matching market with couples, where the core may be empty in finite markets, dual optimizers support a nonempty core in the continuum economy. These results extend optimal-transport methods to a broader class of constrained large matching markets.
%JEL classification:  \\
%Keywords: 
\end{abstract}

\section{Introduction}

Optimal transport problems constitute a class of infinite-dimensional
linear programs. Their analysis is made tractable by the strong duality
theorem commonly referred to as Monge--Kantorovich (henceforth, MK)
duality. Optimal-transport methods have found numerous applications in
economics and have become a central tool in a substantial strand of the
empirical literature on transferable-utility matching
\citep[e.g.,][]{DupuyGalichon2014,CiscatoGalichonGousse2020,
GalichonSalanie2022Cupid,GalichonSalanie2024}; see also
\citet{Chiappori2017Matching}. In these applications, primal--dual
representations link observed matching patterns to matching surplus and
equilibrium utilities and provide methods for identification, estimation,
and computation. For broader surveys of optimal transport in economics,
see \citet{Galichon2016OT,Galichon2021Unreasonable}.

This paper establishes a variant of MK duality for environments in which
one side of the market is a continuum and the other side is finite, with
the finite side subject to general linear equality and inequality
constraints. The distribution of agent types is fixed, whereas the
aggregate assignments to the finite set of alternatives are determined
endogenously subject to these constraints. The formulation accommodates,
among other restrictions, upper and lower capacities,
minimum-utilization requirements, resource constraints linking several
alternatives, and accounting or balance conditions. Unlike a classical
transport problem with a prescribed target marginal, our model does not
require the occupancy of every finite-side alternative to be fixed in
advance. Instead, its occupancy is chosen as part of the
welfare-maximizing matching.

Our main theorem establishes strong duality and both primal and dual
attainment under mere feasibility of the primal constraints. The
finite-dimensional dual variables are the multipliers associated with the
aggregate constraints and may represent prices, taxes, subsidies, or
shadow values, depending on the application. Thus, the theorem applies to
large-market environments in which a continuum of heterogeneous agents is
assigned to finitely many goods, institutions, or bundles whose aggregate
use is institutionally constrained.

To the best of our knowledge, existing optimal-transport duality results
do not establish dual attainment for this continuum--finite model under
mere primal feasibility. The main obstacle is that the multiplier space
is not compact: strong duality alone does not exclude the possibility
that every minimizing sequence of dual variables escapes to infinity. We
derive compactness directly from the geometry of the finite-side
constraint system. Hoffman's error bound yields an exact finite penalty
for violations of the linear constraints, allowing us to restrict the
multipliers to a compact box without changing the optimal value. Sion's
minimax theorem then identifies the reduced dual problem on this compact
domain, where continuity guarantees attainment. The distinctive feature
of the argument is that compactness is obtained from the polyhedral
structure of the primal constraints, rather than from assumed
coercivity, a Slater-type interiority condition, or
application-specific bounds on prices.

This extension of MK duality may also be useful for empirical matching.
Dual potentials encode equilibrium utilities and play a central role in
the identification and computation of optimal-transport-based matching
models. Although we do not develop a new estimator in this paper, our
theorem guarantees the existence of these economically meaningful dual
objects in continuum--finite environments with general linear
constraints. It therefore provides a foundation for adapting the
existing optimal-transport toolkit to a broader class of constrained
matching markets.

We develop two applications. 
%
%First, we revisit the large market with finitely many indivisible goods studied by \citet{AzevedoWeylWhite2013}. As shown by \citet{AzevedoYokote2026}, the original proof of competitive-equilibrium existence relies on an incorrect claim that the set of allocations is compact under the $L^1$ norm. Our duality theorem provides an alternative existence proof. Relative to the repair proposed by \citet{AzevedoYokote2026}, our argument is less general in requiring the type space to be compact. In return, it characterizes equilibrium prices as minimizers of a finite-dimensional convex potential function. Supply minus aggregate demand is a subgradient of this function, so the associated subgradient algorithm has the interpretation of a Walrasian t\^atonnement process. Under standard step-size conditions, the process converges to the set of equilibrium prices without requiring substitutability and therefore permits complementarities.
%%
%%
%
First, we revisit the large market with finitely many indivisible goods studied by \citet{AzevedoWeylWhite2013}. We identify an error in their original proof of competitive-equilibrium existence: the proof relies on an incorrect claim that the set of allocations is compact under the $L^1$ norm.\footnote{Subsequent to an earlier version of
this paper, \citet{AzevedoYokote2026} developed a different repair of the existence proof
that does not rely on duality. For a comparison between our approach and that of
\citet{AzevedoYokote2026}, see Remark~\ref{remark-proof-error}.}
We recover equilibrium existence as a direct application of our duality theorem and,
in addition, show that equilibrium prices are precisely the minimizers of a
finite-dimensional convex potential function. Supply minus aggregate demand is a
subgradient of this function, so a subgradient algorithm takes the form of a Walrasian
t{\^a}tonnement process. Under standard step-size conditions, this process converges to
the set of equilibrium prices without requiring substitutability and thus accommodates
complementarities.

Second, we study a large transferable-utility matching market with
couples. Joint preferences over pairs of positions generate
complementarities, and stable matchings need not exist in finite markets
\citep{Roth1984}. Conditions under which stability is recovered in large
nontransferable-utility markets are studied by
\citet{KojimaPathakRoth2013} and
\citet{AshlagiBravermanHassidim2014}. We consider a
transferable-utility counterpart in which a couple is jointly assigned to
two hospitals and the resulting surplus may be distributed among the
couple and the hospitals. The analogous finite-market difficulty is that
the core may be empty. We represent joint hospital assignments as
finite-side alternatives and hospital capacities as linear constraints.
Our duality theorem then produces a welfare-maximizing matching together
with dual utilities and shadow values that support a core allocation.
Thus, although the core may be empty in the finite market, it is nonempty
in the continuum economy.

The joint-assignment structure in the couples application also arises in
other important allocation problems. For example, families may have joint
preferences over the schools or daycare centers assigned to their
children and may prefer siblings to attend the same institution
\citep{DurMorrillPhan2022,SunTakenamiMoriwakiTomitaYokoo2023,
SunYokoyamaYokoo2025}. %Such complementarities are difficult to represent using separate assignments for each child, but can be encoded by treating a joint assignment as a finite-side alternative and imposing the relevant institutional capacities through linear constraints. 
Our theorem may
therefore provide a useful analytical foundation for other large
assignment markets involving families or groups.

\subsection*{Related literature}

This paper is related to a growing literature that uses
optimal-transport methods in economic design. 
\citet{daskalakis2013mechanism} develop an optimal-transport framework
for multidimensional mechanism design, and
\citet{daskalakis2017strong} establish strong duality for the
multiple-good monopolist problem. 
\citet{kolesnikov2022beckmann} connect multi-item, multi-bidder auction
design to Beckmann's continuous transportation problem, while
\citet{ashlagi2022price} apply MK duality to waiting-list markets in
which indivisible goods arrive over time. These papers study different
economic environments and exploit problem-specific transport
structures. Our contribution is a general duality and dual-attainment
result for continuum--finite matching problems with finite collections
of linear equality and inequality constraints.

Our analysis is also related to constrained optimal transport.
\citet{zaev2015monge} studies transport plans satisfying additional
linear restrictions while keeping all marginal distributions fixed.
His additional restrictions take the form of equality conditions on
selected averages. By contrast, the finite-side marginal in our model is
determined endogenously, and our formulation permits both equality and
inequality constraints on aggregate assignments. In
capacity-constrained optimal transport, both marginals are again fixed,
and the amount transported along each individual source--destination
route is bounded above; see \citet{korman2015dual}. This pointwise
route-capacity restriction differs from an aggregate capacity on a good
or institution. Neither framework yields the dual-attainment result
needed when the finite-side marginal is endogenous and is restricted by
general linear constraints.

The remainder of the paper is organized as follows.
Section~\ref{sect-MKdual} presents the duality theorem.
Section~\ref{sect-application} develops the two applications.
Section~\ref{sect-conclude} concludes, and
Section~\ref{sect-proof} proves the theorem.

\section{A variant of Monge--Kantorovich duality}
\label{sect-MKdual}

Let $T\subseteq\mathbb{R}^m$ be a compact set of agent types, and let
$\tau$ be a Borel probability measure on $T$. Let $X$ be a nonempty finite
set, endowed with the discrete topology. Its elements represent the goods,
institutions, or bundles to which agents may be matched. Let
$\Phi:T\times X\to\mathbb{R}$ be a continuous surplus function, where
$\Phi(t,x)$ is the surplus generated by matching a type-$t$ agent to
alternative $x$.

A \textbf{matching} $\pi$ is a Borel probability measure on $T\times X$.
For a Borel set $S\subseteq T$ and $x\in X$, the number
$\pi(S\times\{x\})$ is the mass of agents with types in $S$ who are matched
to $x$. Let $\Pi$ denote the set of all matchings. For $\pi\in\Pi$, let
$\pi^T$ denote its marginal distribution on $T$, and identify its marginal
distribution on $X$ with the vector
\[
\pi^X:=\bigl(\pi(T\times\{x\})\bigr)_{x\in X}
\in\mathbb{R}_{\geq 0}^X.
\]
Let $C(T)$ denote the space of real-valued continuous functions on $T$.

We impose linear constraints on the aggregate assignments to the finite side
$X$. Let $A\in\mathbb{R}^{k\times |X|}$ and $a\in\mathbb{R}^k$, and let
$B\in\mathbb{R}^{\ell\times |X|}$ and $b\in\mathbb{R}^{\ell}$. For each
$x\in X$, let $A_x\in\mathbb{R}^k$ and $B_x\in\mathbb{R}^{\ell}$ denote
the columns of $A$ and $B$ indexed by $x$. The formulation includes, for
example, upper and lower capacities, aggregate resource constraints, balance
conditions, and prescribed finite-side marginals. We permit $k=0$ or
$\ell=0$, in which case the corresponding constraints and variables are
omitted. Define
\[
\mathcal{K}:=\mathbb{R}^k_+\times\mathbb{R}^{\ell}.
\]

Consider the following primal and dual problems:
\begin{align*}
\textup{(P)}\qquad
& \sup_{\pi\in\Pi}
    \int_{T\times X}\Phi(t,x)\,d\pi(t,x) \\
& \text{subject to }
    \pi^T=\tau,\qquad A\pi^X\leq a,\qquad B\pi^X=b;
\\[0.5em]
\textup{(D)}\qquad
& \inf_{u\in C(T),\,(p,q)\in\mathcal{K}}
    \left\{
    \int_Tu(t)\,d\tau(t)+a\cdot p+b\cdot q
    \right\} \\
& \text{subject to }
    u(t)+A_x\cdot p+B_x\cdot q\geq\Phi(t,x)
    \quad\text{for $\tau$-a.e. $t\in T$}, \\
& \hspace{8.3em}\text{for every $x\in X$.}
\end{align*}
We write $\operatorname{val}(\mathrm{P})$ and
$\operatorname{val}(\mathrm{D})$ for their optimal values, with the
convention that the supremum over an empty primal feasible set is $-\infty$.
The dual feasible set is nonempty: the function
\[
u_0(t):=\max_{x\in X}\Phi(t,x)
\]
is continuous, and $(u_0,0,0)$ is dual feasible. Hence
$\operatorname{val}(\mathrm{D})<+\infty$.

\begin{theorem}\label{thm-1}
Suppose that at least one of the following conditions holds:
\begin{enumerate}
\item[\textup{(i)}] the primal problem \textup{(P)} is feasible;
\item[\textup{(ii)}] $\operatorname{val}(\mathrm{D})>-\infty$.
\end{enumerate}
Then both problems attain their optimal values, and
\[
\operatorname{val}(\mathrm{P})
=
\operatorname{val}(\mathrm{D})
\in\mathbb{R}.
\]
\end{theorem}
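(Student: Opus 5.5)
The plan follows the route announced in the introduction: a Lagrangian reduction to a finite-dimensional problem, an exact-penalty bound from Hoffman's error bound that confines the multipliers to a compact box, and Sion's minimax theorem on that box.

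\emph{Step 1 (weak duality and the reduced dual).} For any primal-feasible $\pi$ and dual-feasible $(u,p,q)$, integrate the dual constraint against $\pi$. Using $\pi^T=\tau$, $p\ge 0$, $A\pi^X\le a$ and $B\pi^X=b$, this gives $\int\Phi\,d\pi\le\int u\,d\tau+a\cdot p+b\cdot q$. Hence $\operatorname{val}(\mathrm{P})\le\operatorname{val}(\mathrm{D})$.

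Next, for fixed $(p,q)\in\mathcal{K}$ the best choice of $u$ is
\[
u_{p,q}(t):=\max_{x\in X}\bigl(\Phi(t,x)-A_x\cdot p-B_x\cdot q\bigr).
\]
This function is continuous, and any feasible $u$ dominates it $\tau$-a.e. Therefore $\operatorname{val}(\mathrm{D})=\inf_{(p,q)\in\mathcal{K}} g(p,q)$, where
\[
g(p,q):=\int u_{p,q}\,d\tau+a\cdot p+b\cdot q.
\]
The function $g$ is finite, convex and Lipschitz.

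\emph{Step 2 (compactness via Hoffman).} Let $\Pi_\tau:=\{\pi\in\Pi:\pi^T=\tau\}$. This set is convex, and it is weakly compact because $T\times X$ is compact. The Lagrangian
\[
L(\pi,p,q)=\int\Phi\,d\pi+p\cdot(a-A\pi^X)+q\cdot(b-B\pi^X)
\]
is affine and weakly continuous in $\pi$, and affine in $(p,q)$. Moreover, $\sup_{\pi\in\Pi_\tau}L(\pi,p,q)=g(p,q)$, since the supremum is attained by a measurable selection of maximizers of $x\mapsto\Phi(t,x)-A_x\cdot p-B_x\cdot q$.

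Assume (i). The set $Y:=\{y\in\mathbb{R}^X_{\ge0}:Ay\le a,\ By=b\}$ is then a nonempty polyhedron. Hoffman's bound gives a constant $H$ such that
\[
\dist(y,Y)\le H\bigl(\|(Ay-a)_+\|+\|By-b\|\bigr)\quad\text{for all }y\in\mathbb{R}^X_{\ge0}.
\]
The key claim is that for some box $K_R:=\mathcal{K}\cap[-R,R]^{k+\ell}$,
\[
\min_{(p,q)\in K_R}L(\pi,p,q)\le\sup\{\textstyle\int\Phi\,d\pi':\pi'\text{ primal feasible}\}
\quad\text{for all }\pi\in\Pi_\tau.
\]
To prove it, take $y=\pi^X$ and a nearest point $y'\in Y$. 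Build a feasible $\pi'$ from $\pi$ by moving mass between alternatives. This changes $\int\Phi\,d\pi$ by at most $2\|\Phi\|_\infty\|y-y'\|_1$, which is at most $c\,H\bigl(\|(A\pi^X-a)_+\|+\|B\pi^X-b\|\bigr)$ for a constant $c$. Then choose $p$ and $q$ with entries $\pm R$, where $R=cH$ up to norm constants, aligned with the violated constraints, so that the penalty term absorbs this loss.

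I expect the main obstacle to be the mass-moving construction. The idea is to rescale the restriction of $\pi$ to each $x$ and to route the type distribution of removed mass onto alternatives that need mass, while keeping $\pi^T=\tau$. Because only the $X$-marginal matters for feasibility, one can take $\pi'=\sum_x\pi(\cdot\times\{x\})$ transported by any coupling of $y$ and $y'$ viewed as a transition between alternatives. This gives $\|\pi-\pi'\|_{TV}\le\|y-y'\|_1$.

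\emph{Step 3 (minimax and attainment).} Sion's theorem applies on $\Pi_\tau\times K_R$, with $\Pi_\tau$ compact, and gives
\[
\min_{K_R}g=\min_{K_R}\sup_{\Pi_\tau}L=\sup_{\Pi_\tau}\min_{K_R}L\le\operatorname{val}(\mathrm{P}).
\]
Combined with weak duality, this yields $\operatorname{val}(\mathrm{D})\le\min_{K_R}g\le\operatorname{val}(\mathrm{P})\le\operatorname{val}(\mathrm{D})$, so all these values are equal and finite. The dual is attained at a minimizer $(p^*,q^*)$ of the continuous $g$ on the compact set $K_R$, paired with $u_{p^*,q^*}$. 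The primal is attained because the feasible set is a nonempty, weakly closed subset of the compact $\Pi_\tau$ (the map $\pi\mapsto\pi^X$ is continuous since $X$ is discrete), and $\pi\mapsto\int\Phi\,d\pi$ is weakly continuous.

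Finally assume (ii). If (P) were infeasible, then $Y=\emptyset$. Farkas' lemma gives $(p,q)\in\mathcal{K}$ with $A_x\cdot p+B_x\cdot q\ge0$ for all $x$ and $a\cdot p+b\cdot q<0$. Then $g(s p,s q)\to-\infty$ as $s\to\infty$, which contradicts (ii). Hence (ii) implies (i), and the previous steps apply.
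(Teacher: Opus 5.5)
Your overall architecture works, but one omission recurs and breaks two steps as written. Your polyhedron $Y=\{y\in\mathbb{R}^X_{\ge0}:Ay\le a,\ By=b\}$ drops the normalization $\1^\top y=1$, although every $\pi\in\Pi_\tau$ has $\pi^X\in\Delta_X$.

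\emph{First consequence (Step 2).} The nearest point $y'\in Y$ to $y=\pi^X$ need not be a probability vector. Then no coupling of $y$ and $y'$ exists, and no $\pi'\in\Pi_\tau$ has $\pi'^X=y'$. For example, take $X=\{x_1,x_2\}$ with the single constraint $y_{x_1}=1/2$; the $\ell^1$-nearest point of $Y$ to $(1,0)$ is $(1/2,0)$.

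\emph{Second consequence (last paragraph).} The claim ``(P) infeasible implies $Y=\emptyset$'' is false. With $|X|=1$, $B=[1]$, $b=2$, you get $Y=\{2\}\neq\emptyset$, while (P) is infeasible.

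\emph{Repair.} Do what the paper does for its set $R$. Apply Hoffman to the system $Ay\le a$, $-y\le 0$, $By=b$, $\1^\top y=1$; the residuals of the last two constraints vanish on $\Delta_X$. Use the Farkas alternative for that same system. It carries an extra free multiplier $\lambda$: there exist $(p,q)\in\mathcal{K}$ and $\lambda$ with $A_x\cdot p+B_x\cdot q+\lambda\ge 0$ for all $x$ and $a\cdot p+b\cdot q+\lambda<0$. Then $g(sp,sq)\le\int_T\max_x\Phi(t,x)\,d\tau+s(a\cdot p+b\cdot q+\lambda)\to-\infty$, so the argument survives.

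\emph{A smaller point.} ``Any coupling'' is too generous. The estimate needs a coupling that keeps $\min\{y_x,y'_x\}$ on the diagonal, so that the off-diagonal mass is $\tfrac12\|y-y'\|_1$; the paper's Claim~\ref{claim-Lipschitz} constructs one. The product coupling with $y=y'$ non-degenerate moves positive mass even though $\|y-y'\|_1=0$.

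With these repairs your proof is correct, and it takes a genuinely different route from the paper.

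\emph{The paper's route.} Step~2 obtains strong duality and primal attainment by a Hahn--Banach separation in $C(T\times X)\times\mathbb{R}$ followed by Riesz--Markov. This also handles case (ii), because the separating measure is shown to be primal feasible. Only afterwards, for dual attainment, does the paper pass to the concave value function $W(r)$ on $\Delta_X$. It proves $W$ Lipschitz by the coupling argument, derives the exact penalty $\operatorname{val}(\mathrm{P})=\max_{\Delta_X}\{W-\rho h\}$ via Hoffman, and applies Sion on the finite-dimensional set $\Delta_X\times Q_\rho$.

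\emph{Your route.} You apply Sion once on $\Pi_\tau\times K_R$, using weak$^*$ compactness of $\Pi_\tau$, and prove the exact-penalty bound pointwise in $\pi$. This yields strong duality and dual attainment together. It dispenses with the separation argument, the Riesz--Markov representation, the approximation of indicators of open sets, and the analysis of $W$.

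\emph{What the paper's route buys.} Its minimax is entirely finite-dimensional. It also makes clear that strong duality does not depend on the polyhedral structure, which is used only for dual attainment.

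You can get the same separation cheaply within your framework. Since $\Pi_\tau$ is compact, Sion on $\Pi_\tau\times\mathcal{K}$ already gives $\operatorname{val}(\mathrm{P})=\operatorname{val}(\mathrm{D})$ under the $-\infty$ convention. That settles case (ii) without Farkas, and the box $K_R$ is then needed only for attainment.
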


\begin{proof}
See Section~\ref{sect-proof}.
\end{proof}

We next reduce the dual problem to a finite-dimensional minimization. For
$(p,q)\in\mathcal{K}$, define
\[
u_{p,q}(t)
:=
\max_{x\in X}
\bigl\{\Phi(t,x)-A_x\cdot p-B_x\cdot q\bigr\}.
\]
Because $X$ is finite, $u_{p,q}\in C(T)$. The components of $p$ and
$q$ are multipliers on the aggregate constraints, and
$A_x\cdot p+B_x\cdot q$ is the resulting generalized charge for alternative
$x$. In the application of Section~\ref{sect-application}, $u_{p,q}$ becomes
an indirect-surplus function. Define the reduced dual objective
\begin{align}
F(p,q)
&:=
\int_Tu_{p,q}(t)\,d\tau(t)+a\cdot p+b\cdot q,
\qquad (p,q)\in\mathcal{K}.
\label{eq-market-potential}
\end{align}

\begin{lemma}\label{lem:reduce}
The dual value satisfies
\[
\operatorname{val}(\mathrm{D})
=
\inf_{(p,q)\in\mathcal{K}}F(p,q).
\]
If $(p^*,q^*)$ minimizes $F$ over $\mathcal{K}$, then
$(u_{p^*,q^*},p^*,q^*)$ is a solution to \textup{(D)}.
\end{lemma}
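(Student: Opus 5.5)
We prove the two inequalities between $\operatorname{val}(\mathrm{D})$ and $\inf_{\mathcal{K}}F$ separately, then read off the statement about minimizers.

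\emph{Step 1 ($\le$).} Fix $(p,q)\in\mathcal{K}$. Since $X$ is finite and $\Phi$ is continuous, $u_{p,q}\in C(T)$. By construction,
\[
u_{p,q}(t)\ge \Phi(t,x)-A_x\cdot p-B_x\cdot q
\]
for every $t\in T$ and $x\in X$. Hence $(u_{p,q},p,q)$ is feasible for (D), and its dual objective equals $F(p,q)$. Therefore $\operatorname{val}(\mathrm{D})\le F(p,q)$. Taking the infimum over $(p,q)\in\mathcal{K}$ gives $\operatorname{val}(\mathrm{D})\le\inf_{\mathcal{K}}F$.

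\emph{Step 2 ($\ge$).} Let $(u,p,q)$ be any feasible point of (D). For each $x\in X$ there is a $\tau$-null set $N_x$ outside of which
\[
u(t)\ge \Phi(t,x)-A_x\cdot p-B_x\cdot q.
\]
Since $X$ is finite, $N:=\bigcup_{x\in X}N_x$ is $\tau$-null. Taking the maximum over $x$ gives $u(t)\ge u_{p,q}(t)$ for all $t\notin N$, that is, $\tau$-a.e. Both functions are continuous on the compact set $T$, hence bounded and integrable, so
\[
\int_T u\,d\tau\ \ge\ \int_T u_{p,q}\,d\tau .
\]
Adding $a\cdot p+b\cdot q$ shows that the dual objective at $(u,p,q)$ is at least $F(p,q)$, which is at least $\inf_{\mathcal{K}}F$. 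Taking the infimum over feasible $(u,p,q)$ gives $\operatorname{val}(\mathrm{D})\ge\inf_{\mathcal{K}}F$.

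\emph{Step 3 (minimizers).} Suppose $(p^*,q^*)$ minimizes $F$ over $\mathcal{K}$. By Step 1, $(u_{p^*,q^*},p^*,q^*)$ is feasible for (D) with objective value $F(p^*,q^*)=\inf_{\mathcal{K}}F$. By Steps 1 and 2 this common value is $\operatorname{val}(\mathrm{D})$, so $(u_{p^*,q^*},p^*,q^*)$ solves (D). The argument covers the case $\inf_{\mathcal{K}} F=-\infty$ as well: both sides are then $-\infty$, and no minimizer exists.

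The only point needing care is in Step 2, where the dual constraint holds only $\tau$-a.e. The finiteness of $X$ makes the union of exceptional sets null, and the compactness of $T$ ensures integrability.
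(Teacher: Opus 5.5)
Your proof is correct and follows the same route as the paper. Feasibility of $(u_{p,q},p,q)$ gives one inequality, the $\tau$-a.e. bound $u\ge u_{p,q}$ for any dual-feasible $(u,p,q)$ gives the other, and the minimizer statement follows at once; you just spell out details the paper leaves implicit, namely the finite union of null sets and integrability on compact $T$.
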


\begin{proof}
Fix $(p,q)\in\mathcal{K}$. If $(u,p,q)$ is dual feasible, then
$u(t)\geq u_{p,q}(t)$ for $\tau$-a.e. $t$. Conversely,
$(u_{p,q},p,q)$ is dual feasible. Thus, for fixed $(p,q)$, the smallest
possible value of $\int_Tu\,d\tau$ is attained at $u_{p,q}$. Taking the
infimum over $(p,q)\in\mathcal{K}$ proves the first assertion, and the second
follows immediately.
\end{proof}

Combining Theorem~\ref{thm-1} and Lemma~\ref{lem:reduce}, under either
condition of the theorem,
\[
\operatorname{val}(\mathrm{P})
=
\min_{(p,q)\in\mathcal{K}}F(p,q).
\]
This representation is the economic implication of duality used below: it
represents the infinite-dimensional welfare problem over matchings as a
finite-dimensional minimization over the multipliers attached to the
aggregate constraints. Dual attainment supplies an actual minimizing
multiplier vector, rather than only a minimizing sequence. In
Section~\ref{sect-application}, these multipliers become equilibrium prices
and $F$ specializes to the potential function used to characterize and
compute them.

Before proceeding to application part, we record two basic analytical properties of $F$.

\begin{claim}\label{claim:F-properties}
The function $F$ is finite-valued, convex, and globally Lipschitz continuous
on $\mathcal{K}$.
\end{claim}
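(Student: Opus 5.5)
The plan is to prove the three properties pointwise in $t$ for the integrand $u_{p,q}(t)$ and then integrate against $\tau$. The linear term $a\cdot p+b\cdot q$ is finite, convex, and Lipschitz, so it only adds harmlessly at the end.

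\emph{Finiteness.} Fix $(p,q)\in\mathcal{K}$. The function $u_{p,q}$ is a maximum of finitely many continuous functions of $t$, so it lies in $C(T)$, as already noted before (\ref{eq-market-potential}). Since $T$ is compact, $u_{p,q}$ is bounded. Because $\tau$ is a probability measure, $\int_T u_{p,q}\,d\tau$ is therefore finite, and hence $F(p,q)\in\mathbb{R}$.

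\emph{Convexity.} For each fixed $t$, the map $(p,q)\mapsto \Phi(t,x)-A_x\cdot p-B_x\cdot q$ is affine. The pointwise maximum over $x\in X$ is therefore convex in $(p,q)$. Take a convex combination $\lambda(p,q)+(1-\lambda)(p',q')$, which stays in $\mathcal{K}$ because $\mathcal{K}$ is a convex cone. Then
\[
u_{\lambda(p,q)+(1-\lambda)(p',q')}(t)\le \lambda u_{p,q}(t)+(1-\lambda)u_{p',q'}(t)
\]
holds for every $t$. Integrating against the nonnegative measure $\tau$ preserves this inequality, and adding the linear term shows that $F$ is convex.

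\emph{Lipschitz continuity.} This is the only step requiring an explicit constant. I will use the elementary fact $|\max_x f_x-\max_x g_x|\le \max_x|f_x-g_x|$ for finite families. Applied to the two families defining $u_{p,q}(t)$ and $u_{p',q'}(t)$, it gives
\[
|u_{p,q}(t)-u_{p',q'}(t)|\le \max_{x\in X}\bigl|A_x\cdot(p-p')+B_x\cdot(q-q')\bigr|\le L_0\,\|(p,q)-(p',q')\|,
\]
where $L_0:=\max_{x}\|(A_x,B_x)\|$ comes from Cauchy--Schwarz. This bound is uniform in $t$. Integrating against the probability measure $\tau$ gives the same bound for the integral term, and the linear term contributes at most $\|(a,b)\|\,\|(p,q)-(p',q')\|$. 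Hence $F$ is Lipschitz on $\mathcal{K}$ with constant $L_0+\|(a,b)\|$.

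None of these steps is a genuine obstacle. The only point to be careful about is that the bounds are uniform in $t$, so that integration is legitimate. The constant $L_0$ does not depend on $t$, and measurability is automatic because each $u_{p,q}$ is continuous.
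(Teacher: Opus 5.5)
Your proposal is correct and follows essentially the same argument as the paper. Both prove convexity from the pointwise maximum of affine maps integrated against $\tau$, and both obtain Lipschitz continuity from the uniform-in-$t$ bound $|u_{p,q}(t)-u_{p',q'}(t)|\le \max_{x}\|(A_x,B_x)\|\,\|(p,q)-(p',q')\|$, with the same constant $\max_x\|(A_x,B_x)\|+\|(a,b)\|$; finiteness comes from continuity of $u_{p,q}$ on compact $T$.
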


\begin{proof}
For each $t\in T$, the map $(p,q)\mapsto u_{p,q}(t)$ is the pointwise
maximum of finitely many affine functions and is therefore convex.
Integration preserves convexity, and $(p,q)\mapsto a\cdot p+b\cdot q$ is
affine. Hence $F$ is convex.

For Lipschitz continuity, write $z:=(p,q)$, $z':=(p',q')$,
$c_x:=(A_x,B_x)$, and $d:=(a,b)$, and use the Euclidean norm on
$\mathbb{R}^{k+\ell}$. For every $t\in T$,
\[
|u_z(t)-u_{z'}(t)|
\leq
\max_{x\in X}|c_x\cdot(z-z')|
\leq
\left(\max_{x\in X}\|c_x\|\right)\|z-z'\|.
\]
Since $\tau$ is a probability measure,
\[
|F(z)-F(z')|
\leq
\left(\max_{x\in X}\|c_x\|+\|d\|\right)\|z-z'\|.
\]
Thus $F$ is globally Lipschitz continuous. Finally, $F$ is finite-valued
because $T$ is compact, $X$ is finite, and $\Phi$ is continuous.
\end{proof}

\section{Applications}
\label{sect-application}

\subsection{Markets with indivisible goods}
\label{subsec:indivisible-goods}

%\subsubsection{Model}

Let $G$ be a finite set of indivisible goods, and let
$X:=\{0,1\}^G$ denote the set of bundles. Let
$T\subseteq\mathbb{R}^{X\setminus\{\mathbf{0}\}}$ be compact, and let
$\tau$ be a Borel probability measure on $T$. For each $t\in T$ and
$x\in X\setminus\{\mathbf{0}\}$, the coordinate $t_x$ is the valuation of
a type-$t$ buyer for bundle $x$; set $t_{\mathbf{0}}:=0$ and
$\Phi(t,x):=t_x$. Let $s\in(0,1)^G$ denote the supply vector.

%As shown by \citet{AzevedoYokote2026}, the original equilibrium-existence proof in \citet{AzevedoWeylWhite2013} contains a gap; see Section~\ref{appsec:AWW-gap} of the Online Appendix. Our duality theorem provides an alternative proof, distinct from the repair proposed by \citet{AzevedoYokote2026}. Relative to their argument, ours is less general because it requires the type space $T$ to be compact. Its advantage is that it also reduces the computation of equilibrium prices to convex minimization (see Theorem \ref{thm:AWW_equiv_char} and the discussion that follows). 

%\subsubsection{Duality and competitive equilibrium}
%\label{sect-equil-existence}

Let $B\in\mathbb{R}^{|G|\times|X|}$ be the incidence matrix whose column
indexed by $x$ is $x$ itself. Consider
\begin{align}
\tag{$\mathrm{P}_{\mathrm{I}}$}\label{eq:P_AWW}
\sup_{\pi\in\Pi}\quad
&\int_{T\times X}\Phi(t,x)\,d\pi(t,x)\\
\text{subject to}\quad
&\pi^T=\tau,\nonumber\\
&B\pi^X=s,\nonumber
\end{align}
and its dual
\begin{align}
\tag{$\mathrm{D}_{\mathrm{I}}$}\label{eq:D_AWW}
\inf_{u,p}\quad
&\int_Tu(t)\,d\tau(t)+p^\top s\\
\text{subject to}\quad
&u(t)+p^\top x\geq\Phi(t,x)
\quad\text{for $\tau$-a.e. $t$ and every $x\in X$},\nonumber\\
&u\in C(T),\quad p\in\mathbb{R}^G.\nonumber
\end{align}
The primal problem is feasible: for example, let $r\in\Delta(X)$ be the
product distribution under which good $g$ is included independently with
probability $s_g$, and take $\pi=\tau\otimes r$. By
Theorem~\ref{thm-1}, both problems have solutions and their optimal values
coincide.

For $p\in\mathbb{R}^G$ and $t\in T$, define 
\begin{align}
D(p,t)
:=
\argmax_{x\in X}\{\Phi(t,x)-p^\top x\}.
\label{eq:demand}
\end{align} 
For $p\in\mathbb{R}^G$, define the indirect surplus and the potential
function by
\[
V_t(p):=\max_{x\in X}\{\Phi(t,x)-p^\top x\},
\qquad
L(p):=\int_TV_t(p)\,d\tau(t)+p^\top s.
\]
By Lemma~\ref{lem:reduce}, the dual problem is equivalent to
$\min_{p\in\mathbb{R}^G}L(p)$, and this minimum is attained.

\begin{definition}\label{def-equil-2}
{\normalfont
A pair $(\pi,p)\in\Pi\times\mathbb{R}^G$ is a \textbf{competitive
equilibrium} if
\[
\pi^T=\tau,
\qquad
B\pi^X=s,
\qquad
x\in D(p,t)
\quad\text{for $\pi$-a.e. }(t,x).
\]
}
\end{definition}

\begin{remark}
{\normalfont 
Definition \ref{def-equil-2} may appear different from the definition of competitive equilibrium in \citet{AzevedoWeylWhite2013}. Our definition represents an allocation by a Borel measure on $T\times X$, whereas \citet{AzevedoWeylWhite2013}
 represent it by a measurable map from types to
lotteries over bundles. 
The two formulations are equivalent;
see Section~\ref{appsec:AWW-equivalence} of the Online Appendix.
} 
\qedo
\end{remark}

\begin{remark}
\label{remark-proof-error} 
{\normalfont 
The gap in the original equilibrium-existence proof of
\citet{AzevedoWeylWhite2013} was first identified in an earlier version of
the present paper: the proof invokes an $L^1$-compactness claim that is
false; see Section~\ref{appsec:AWW-gap} of the Online Appendix.
Following this observation, \citet{AzevedoYokote2026} developed a different
repair of the existence proof that does not rely on duality.
We instead recover equilibrium existence as a direct application of our
duality theorem.
While our duality-based argument assumes that the type space $T$ is compact,
an assumption not imposed by \citet{AzevedoYokote2026}, it yields the
additional benefit of reducing equilibrium-price computation to a
finite-dimensional convex minimization problem; see
Theorem~\ref{thm:AWW_equiv_char} and the discussion that follows.
\qedo
}
\end{remark}

\begin{theorem}
\label{thm:AWW_equiv_char}
Let $\pi^\ast\in\Pi$ and $p^\ast\in\mathbb{R}^G$. The following statements
are equivalent:
\begin{enumerate}
\item $\pi^\ast$ solves \eqref{eq:P_AWW}, and $p^\ast$ minimizes $L$;
\item $(\pi^\ast,p^\ast)$ is a competitive equilibrium.
\end{enumerate}
Consequently, a competitive equilibrium exists.
\end{theorem}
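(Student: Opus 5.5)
The plan is to use the standard complementary-slackness argument, built on Theorem~\ref{thm-1} and Lemma~\ref{lem:reduce}. The central identity is as follows. For any feasible $\pi$ of \eqref{eq:P_AWW} and any $p\in\mathbb{R}^G$, the constraint $B\pi^X=s$ gives $p^\top s=\int_{T\times X}p^\top x\,d\pi(t,x)$. The marginal condition $\pi^T=\tau$ gives $\int_T V_t(p)\,d\tau=\int_{T\times X}V_t(p)\,d\pi(t,x)$. Hence
\[
L(p)-\int_{T\times X}\Phi\,d\pi
=\int_{T\times X}\bigl(V_t(p)-\Phi(t,x)+p^\top x\bigr)\,d\pi(t,x),
\]
and the integrand is nonnegative, vanishing exactly when $x\in D(p,t)$. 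The integrand is continuous, hence Borel, because $V_t(p)$ is continuous in $t$ and $X$ is discrete. This identity carries both directions.

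For (1)$\Rightarrow$(2), suppose $\pi^\ast$ solves \eqref{eq:P_AWW} and $p^\ast$ minimizes $L$. Feasibility gives the first two equilibrium conditions. Theorem~\ref{thm-1}, which applies because the primal is feasible as noted above, together with Lemma~\ref{lem:reduce}, gives $\int\Phi\,d\pi^\ast=\operatorname{val}(\mathrm{P})=\operatorname{val}(\mathrm{D})=L(p^\ast)$. The left side of the identity is therefore zero. A nonnegative integrand with zero integral vanishes $\pi^\ast$-a.e., so $x\in D(p^\ast,t)$ for $\pi^\ast$-a.e.\ $(t,x)$.

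For (2)$\Rightarrow$(1), suppose $(\pi^\ast,p^\ast)$ is a competitive equilibrium. Then $\pi^\ast$ is primal feasible and the integrand vanishes $\pi^\ast$-a.e., so $L(p^\ast)=\int\Phi\,d\pi^\ast$. Weak duality completes the argument. For any feasible $\pi$ and any $p$, the identity gives $\int\Phi\,d\pi\le L(p)$. Taking $\pi=\pi^\ast$ yields $L(p^\ast)=\int\Phi\,d\pi^\ast\le L(p)$ for every $p$, so $p^\ast$ minimizes $L$. Taking $p=p^\ast$ yields $\int\Phi\,d\pi\le L(p^\ast)=\int\Phi\,d\pi^\ast$ for every feasible $\pi$, so $\pi^\ast$ is optimal. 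This direction does not need Theorem~\ref{thm-1}.

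Existence then follows from what the paper has already established: the primal is feasible, so Theorem~\ref{thm-1} gives a primal solution $\pi^\ast$, and Lemma~\ref{lem:reduce} gives a minimizer $p^\ast$ of $L$. By (1)$\Rightarrow$(2), the pair $(\pi^\ast,p^\ast)$ is an equilibrium. The only potentially delicate step is (1)$\Rightarrow$(2), because it requires zero duality gap and dual attainment; both are supplied by Theorem~\ref{thm-1}, so beyond that only the measurability of the integrand needs checking.
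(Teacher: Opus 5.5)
Your proposal is correct and matches the paper's proof step for step. Both use the same gap identity $L(p)-\int\Phi\,d\pi=\int[V_t(p)-\Phi(t,x)+p^\top x]\,d\pi\ge 0$, strong duality from Theorem~\ref{thm-1} for (1)$\Rightarrow$(2), the identity itself as weak duality for (2)$\Rightarrow$(1), and existence via Theorem~\ref{thm-1} and Lemma~\ref{lem:reduce}. Strictly, the minimizer of $L$ comes from the dual attainment in Theorem~\ref{thm-1}, and Lemma~\ref{lem:reduce} only identifies $\inf L$ with $\operatorname{val}(\mathrm{D})$; the paper is equally brief on this point.
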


\begin{proof}
For every feasible matching $\pi$ and every $p\in\mathbb{R}^G$,
\begin{align}
L(p)-\int_{T\times X}\Phi(t,x)\,d\pi(t,x)
&=
\int_{T\times X}
\bigl[V_t(p)-\Phi(t,x)+p^\top x\bigr]d\pi(t,x)
\geq0.
\label{eq:AWW-gap}
\end{align}
The equality uses $\pi^T=\tau$ and $B\pi^X=s$. Moreover, equality holds
in \eqref{eq:AWW-gap} if and only if $x\in D(p,t)$ for $\pi$-a.e.
$(t,x)$.

If $\pi^\ast$ solves \eqref{eq:P_AWW} and $p^\ast$ minimizes $L$, strong
duality implies equality in \eqref{eq:AWW-gap}; hence
$(\pi^\ast,p^\ast)$ is a competitive equilibrium. Conversely, if
$(\pi^\ast,p^\ast)$ is a competitive equilibrium, equality holds in
\eqref{eq:AWW-gap}. Since the inequality holds for every feasible $\pi$
and every $p$, this equality implies both that $\pi^\ast$ maximizes the
primal objective and that $p^\ast$ minimizes $L$. Existence follows from
Theorem~\ref{thm-1} and Lemma~\ref{lem:reduce}.
\end{proof}

%\subsubsection{Computation of equilibrium prices}
%\label{sect-AWW-computation}

Theorem~\ref{thm:AWW_equiv_char} identifies equilibrium prices with the
minimizers of the finite-dimensional convex function $L$. Its
subdifferential is
\begin{align}
\partial L(p)
&=
-\int_T\operatorname{co}D(p,t)\,d\tau(t)+\{s\},
\label{eq:AWW-subdiff}
\end{align}
where the integral is the Aumann integral; see
\citet{ioffe1972subdifferentials} or equation~(3) of
\citet{correa2021qualification}. Equivalently,
\[
\partial L(p)
=
\left\{
 s-\int_Td(t)\,d\tau(t)
 \;\middle|\;
 \begin{array}{l}
 d:T\to\mathbb{R}^G\text{ is measurable, and}\\
 d(t)\in\operatorname{co}D(p,t)
 \text{ for $\tau$-a.e. }t
 \end{array}
\right\}.
\]
Thus, every measurable demand selection generates a subgradient equal to
supply minus aggregate demand. Given a measurable, possibly mixed demand
selection
$d^k(t)\in\operatorname{co}D(p^k,t)$,\footnote{For each $k$, the
measurable maximum theorem guarantees the existence of a measurable pure
selection $d^k:T\to X$ satisfying
$d^k(t)\in D(p^k,t)$; see Theorem~18.19 of
\citet{AliprantisBorder2006}. Since
$D(p^k,t)\subseteq\operatorname{co}D(p^k,t)$ and $X$ is finite, this
selection is also admissible in the present formula and is integrable.}
the subgradient iteration is
\[
p^{k+1}
=
p^k+\alpha_k
\left(\int_Td^k(t)\,d\tau(t)-s\right),
\]
where $\alpha_k>0$ is the step size at iteration $k$, which determines the
magnitude of the price adjustment. Thus, prices rise in directions of
excess demand, as in a Walrasian t{\^a}tonnement process. Because $X$ is
finite, the subgradients are uniformly bounded. Under the standard
step-size conditions
\[
\sum_{k=0}^{\infty}\alpha_k=\infty
\qquad\text{and}\qquad
\sum_{k=0}^{\infty}\alpha_k^2<\infty,
\]
standard subgradient-convergence results imply that the distance from
$p^k$ to the set of minimizers of $L$ converges to zero. By
Theorem~\ref{thm:AWW_equiv_char}, this set coincides with the set of
competitive-equilibrium prices. Notably, this conclusion requires no
substitutability assumption: the t{\^a}tonnement process converges to the
set of equilibrium prices even when agents regard some goods as
complements. 

\subsection{Matching markets with couples}
\label{subsec:couples}

Let $H$ be a nonempty finite set of hospitals, let
$H_0:=H\cup\{\varnothing\}$, and let
\[
X:=H_0\times H_0,
\qquad
x^0:=(\varnothing,\varnothing).
\]
An assignment $x=(h_1,h_2)$ specifies the hospitals assigned to the two
members of a couple; the order allows the members to have different roles.
For each $h\in H$ and $x=(h_1,h_2)\in X$, define
\[
a_{h,x}:=\mathbf{1}_{\{h_1=h\}}+\mathbf{1}_{\{h_2=h\}}
\in\{0,1,2\},
\qquad
a_x:=(a_{h,x})_{h\in H},
\]
and let $A:=(a_{h,x})_{h\in H,\,x\in X}$. Thus $a_{h,x}$ is the amount
of hospital $h$'s capacity used by assignment $x$. Let
$s\in\mathbb{R}_+^H$ be the vector of hospital capacities.

Let $T\subseteq\mathbb{R}^{X\setminus\{x^0\}}$ be compact, and let $\tau$
be a Borel probability measure on $T$. For $t\in T$, set
$t_{x^0}:=0$ and define
\[
\Phi(t,x):=t_x
\qquad (t\in T,\ x\in X).
\]
Single agents can be embedded in this formulation by restricting their
positive-surplus assignments to pairs containing at most one nonempty
hospital.

Core nonemptiness can fail in the corresponding finite market. For
example, consider three couples $c_1,c_2,c_3$ and three unit-capacity
hospitals $h_1,h_2,h_3$. Suppose that surplus equals $2$ only for
$(c_1,h_1,h_2)$, $(c_2,h_2,h_3)$, and $(c_3,h_3,h_1)$, and equals $0$
otherwise. At most one productive coalition can be realized. Under any
individually rational feasible payoff for that coalition, one of its two
hospitals receives at most $1$; that hospital can then join the adjacent
unmatched couple and hospital in a blocking coalition. The finite-market
core is therefore empty. In the continuum replica with mass $1/3$ of each
couple type and capacity $1/3$ at each hospital, by contrast, matching mass
$1/6$ of each type to its productive hospital pair and leaving the
remaining couples unmatched supports a core utility profile in which
couples receive zero and each hospital receives $1/3$.

The welfare-maximization problem is
\begin{align}
\tag{$\mathrm{P}_{\mathrm{C}}$}\label{eq:P-couples}
\sup_{\pi\in\Pi}\quad
&\int_{T\times X}\Phi(t,x)\,d\pi(t,x)\\
\text{subject to}\quad
&\pi^T=\tau,\nonumber\\
&A\pi^X\leq s,\nonumber
\end{align}
and its dual is
\begin{align}
\tag{$\mathrm{D}_{\mathrm{C}}$}\label{eq:D-couples}
\inf_{u,p}\quad
&\int_Tu(t)\,d\tau(t)+s\cdot p\\
\text{subject to}\quad
&u(t)+a_x\cdot p\geq\Phi(t,x)
\quad\text{for $\tau$-a.e. $t$ and every $x\in X$},\nonumber\\
&u\in C(T),\quad p\in\mathbb{R}_+^H.\nonumber
\end{align}
The all-unmatched matching is feasible for \eqref{eq:P-couples}; hence
Theorem~\ref{thm-1} implies that \eqref{eq:P-couples} and
\eqref{eq:D-couples} have solutions and equal optimal values.

A \textbf{utility profile} is a pair $(u,v)$, where
$u:T\to\mathbb{R}$ is Borel measurable and $\tau$-integrable and
$v=(v_h)_{h\in H}\in\mathbb{R}^H$. The function $u$ gives the utility of
each couple type, and $v_h$ is the aggregate utility of hospital $h$. Thus
all couples of the same type receive the same utility, even if they are
assigned to different hospital pairs.

A utility profile $(u,v)$ is \textbf{feasible under} a matching $\pi$ if
$\pi$ satisfies the constraints of \eqref{eq:P-couples} and there exist
Borel measurable, $\pi$-integrable functions
$\varphi_h:T\times X\to\mathbb{R}$, $h\in H$, such that
\begin{align*}
&\varphi_h(t,x)=0
&&\text{whenever }a_{h,x}=0,\\
&u(t)+\sum_{h\in H}\varphi_h(t,x)=\Phi(t,x)
&&\text{for $\pi$-a.e. }(t,x),\\
&v_h=\int_{T\times X}\varphi_h(t,x)\,d\pi(t,x)
&&\text{for every }h\in H.
\end{align*}
A utility profile is \textbf{feasible} if it is feasible under some
matching.

To define blocking, represent a coalition of couples by a finite Borel
measure $\mu$ satisfying $0\leq\mu\leq\tau$, meaning that
$\mu(E)\leq\tau(E)$ for every Borel set $E\subseteq T$. This permits a
coalition to contain any fraction of each type. Let $K\subseteq H$ be the
coalition of hospitals, let $K_0:=K\cup\{\varnothing\}$, and let
$X_K:=K_0\times K_0$. Denote by $\Gamma(\mu,K)$ the set of finite Borel
measures $\gamma$ on $T\times X_K$ such that
\[
\gamma^T=\mu
\quad\text{and}\quad
\sum_{x\in X_K}a_{h,x}\gamma(T\times\{x\})\leq s_h
\quad\text{for every }h\in K.
\]
The worth of the coalition is
\[
w(\mu,K)
:=
\sup_{\gamma\in\Gamma(\mu,K)}
\int_{T\times X_K}\Phi(t,x)\,d\gamma(t,x).
\]
The set $\Gamma(\mu,K)$ is nonempty because all couples may be assigned to
$x^0$. We say that $(\mu,K)$ \textbf{blocks} a utility profile $(u,v)$ if
\[
w(\mu,K)
>
\int_Tu(t)\,d\mu(t)+\sum_{h\in K}v_h.
\]

\begin{definition}\label{def:couples-core}
{\normalfont
A feasible utility profile $(u,v)$ is in the \textbf{core} if
\begin{enumerate}
\item $u(t)\geq0$ for $\tau$-a.e. $t$ and $v_h\geq0$ for every $h\in H$;
\item for every $0\leq\mu\leq\tau$ and every $K\subseteq H$,
\[
\int_Tu(t)\,d\mu(t)+\sum_{h\in K}v_h
\geq
w(\mu,K).
\]
\end{enumerate}
}
\end{definition}
The second condition is equivalent to the absence of blocking coalitions
under this transferable-utility criterion.

\begin{theorem}\label{thm:couples-core}
Let $\pi^\ast$ solve \eqref{eq:P-couples}, and let $(u^\ast,p^\ast)$ solve
\eqref{eq:D-couples}. Define
\[
v_h^\ast:=s_hp_h^\ast
\qquad (h\in H).
\]
Then $(u^\ast,v^\ast)$ is feasible under $\pi^\ast$ and belongs to the
core. In particular, the core is nonempty.
\end{theorem}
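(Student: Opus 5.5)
The proof rests on complementary slackness between $\pi^\ast$ and $(u^\ast,p^\ast)$, which follows from strong duality in Theorem~\ref{thm-1}. For any primal-feasible $\pi$ and dual-feasible $(u,p)$, the constraints $\pi^T=\tau$ and $A\pi^X\le s$ give
\[
\int_T u\,d\tau+s\cdot p-\int\Phi\,d\pi
=\int_{T\times X}\bigl[u(t)+a_x\cdot p-\Phi(t,x)\bigr]d\pi(t,x)+p\cdot\bigl(s-A\pi^X\bigr),
\]
and both terms are nonnegative. Equality of values at $(\pi^\ast,u^\ast,p^\ast)$ therefore yields two facts. First, $u^\ast(t)+a_x\cdot p^\ast=\Phi(t,x)$ for $\pi^\ast$-a.e.\ $(t,x)$. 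Second, $p_h^\ast\bigl(s_h-(A\pi^{\ast X})_h\bigr)=0$ for every $h$. One point needs care: the dual inequality holds only for $\tau$-a.e.\ $t$. Because $\pi^{\ast T}=\tau$, it also holds $\pi^\ast$-a.e., so the integrand is nonnegative $\pi^\ast$-a.e.

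\emph{Feasibility.} Set $\varphi_h(t,x):=a_{h,x}p_h^\ast$. This vanishes whenever $a_{h,x}=0$ and is bounded, hence integrable. By the first slackness fact, $u^\ast(t)+\sum_h\varphi_h(t,x)=u^\ast(t)+a_x\cdot p^\ast=\Phi(t,x)$ holds $\pi^\ast$-a.e. We also have
\[
\int\varphi_h\,d\pi^\ast=p_h^\ast(A\pi^{\ast X})_h=p_h^\ast s_h=v_h^\ast
\]
by the second slackness fact. Finally, $u^\ast\in C(T)$, so it is Borel and $\tau$-integrable because $T$ is compact.

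\emph{Core.} Individual rationality comes from the dual constraint at $x=x^0$: since $a_{x^0}=0$ and $\Phi(t,x^0)=0$, we get $u^\ast\ge0$ $\tau$-a.e. In addition, $v_h^\ast=s_hp_h^\ast\ge0$ because $p^\ast\ge0$ and $s\ge0$. For the no-blocking condition, fix $0\le\mu\le\tau$, $K\subseteq H$, and $\gamma\in\Gamma(\mu,K)$. Since $\mu\le\tau$, any $\tau$-null set is $\mu$-null, so the dual inequality holds $\gamma$-a.e. on $T\times X_K$. Integrating it gives
\[
\int\Phi\,d\gamma\le\int u^\ast\,d\mu+\sum_{x\in X_K}\gamma(T\times\{x\})\,a_x\cdot p^\ast .
\]
For $x\in X_K$, $a_{h,x}=0$ whenever $h\notin K$, so the last term equals $\sum_{h\in K}p_h^\ast\sum_{x}a_{h,x}\gamma(T\times\{x\})\le\sum_{h\in K}p_h^\ast s_h$, using $p^\ast\ge0$ and the capacity constraints in $\Gamma(\mu,K)$. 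Taking the supremum over $\gamma$ gives $w(\mu,K)\le\int u^\ast d\mu+\sum_{h\in K}v_h^\ast$. Nonemptiness of the core then follows from the existence of $\pi^\ast$ and $(u^\ast,p^\ast)$, which Theorem~\ref{thm-1} guarantees.

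The main obstacle is bookkeeping rather than depth. The two points that need care are transferring the $\tau$-a.e.\ dual inequality to $\pi^\ast$-a.e.\ and $\gamma$-a.e.\ statements, and verifying that the slackness identities hold exactly, which requires that the primal and dual values are finite and equal. Both are covered by Theorem~\ref{thm-1}, since the all-unmatched matching is feasible.
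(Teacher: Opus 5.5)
Your proposal is correct and follows the paper's proof step for step. It uses the same zero-gap decomposition to obtain complementary slackness, the same choice $\varphi_h(t,x)=a_{h,x}p_h^\ast$, individual rationality from the dual constraint at $x^0$, and the same integration of the dual inequality against $\gamma$ for the no-blocking condition. You also supply details the paper leaves implicit: the transfer of the $\tau$-a.e.\ dual inequality to $\pi^\ast$-a.e.\ and $\gamma$-a.e.\ statements, and the observation that $a_{h,x}=0$ for $h\notin K$ when $x\in X_K$.
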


\begin{proof}
Strong duality gives
\begin{align}
0
&=
\left(\int_Tu^\ast(t)\,d\tau(t)+s\cdot p^\ast\right)
-
\int_{T\times X}\Phi(t,x)\,d\pi^\ast(t,x)\notag\\
&=
\int_{T\times X}
\bigl[u^\ast(t)+a_x\cdot p^\ast-\Phi(t,x)\bigr]
\,d\pi^\ast(t,x)
+
p^\ast\cdot\bigl(s-A\pi^{\ast X}\bigr).
\label{eq:couples-gap}
\end{align}
Both terms on the last line are nonnegative. Therefore,
\[
u^\ast(t)+a_x\cdot p^\ast=\Phi(t,x)
\quad\text{for $\pi^\ast$-a.e. }(t,x),
\]
and
\[
p_h^\ast\bigl(s_h-(A\pi^{\ast X})_h\bigr)=0
\quad\text{for every }h\in H.
\]
For each $h\in H$, define
\[
\varphi_h(t,x):=a_{h,x}p_h^\ast.
\]
These functions satisfy the support and surplus-sharing requirements in the
definition of feasibility. Moreover, complementary slackness implies
\[
\int_{T\times X}\varphi_h(t,x)\,d\pi^\ast(t,x)
=
p_h^\ast(A\pi^{\ast X})_h
=
s_hp_h^\ast
=
v_h^\ast.
\]
Thus $(u^\ast,v^\ast)$ is feasible under $\pi^\ast$.

Individual rationality follows from dual feasibility. Since
$a_{x^0}=0$ and $\Phi(t,x^0)=0$, we have $u^\ast(t)\geq0$ for
$\tau$-a.e. $t$. Also, $v_h^\ast=s_hp_h^\ast\geq0$ for every $h$.

Finally, fix $0\leq\mu\leq\tau$, $K\subseteq H$, and
$\gamma\in\Gamma(\mu,K)$. Since $\mu\leq\tau$, dual feasibility implies
\begin{align*}
\int_{T\times X_K}\Phi(t,x)\,d\gamma(t,x)
&\leq
\int_Tu^\ast(t)\,d\mu(t)
+
\sum_{h\in K}p_h^\ast
\sum_{x\in X_K}a_{h,x}\gamma(T\times\{x\})\\
&\leq
\int_Tu^\ast(t)\,d\mu(t)
+
\sum_{h\in K}s_hp_h^\ast\\
&=
\int_Tu^\ast(t)\,d\mu(t)
+
\sum_{h\in K}v_h^\ast.
\end{align*}
Taking the supremum over $\gamma\in\Gamma(\mu,K)$ proves the core
inequality.
\end{proof}

\section{Concluding remarks}
\label{sect-conclude}

This paper establishes a new Monge--Kantorovich duality theorem for matching problems with a continuum of agents, a finite set of alternatives, and general linear constraints. 
We apply the theorem to two large-market environments. 
In the indivisible-goods model of \citet{AzevedoWeylWhite2013}, it recovers competitive-equilibrium existence and characterizes equilibrium prices as minimizers of a convex potential function, thereby yielding a t\^atonnement-style computational method. 
In the matching-with-couples model, joint assignments use capacity at multiple hospitals and the corresponding finite-market core may be empty. 
In the continuum counterpart, however, primal and dual optimizers support a core utility profile, establishing core nonemptiness.

The two applications illustrate complementary roles of dual attainment. 
It supplies prices that decentralize and facilitate the computation of welfare-maximizing assignments, and it supplies supporting utilities and shadow values that rule out blocking coalitions. 
Because the theorem accommodates general linear constraints on the finite side, the same approach may be useful in other large assignment problems with group preferences or institutional, distributional, and capacity constraints, including the assignment of siblings to schools or daycare centers.

\section{Proof of Theorem \ref{thm-1}} 
\label{sect-proof} 
%Throughout the proof, we write $\text{val(P)}$ to denote the primal optimal value, and $\text{val(D)}$ to denote the dual optimal value. 
We divide the proof into three parts. 
\begin{itemize}
\item Step 1: Proof of \text{val(D)} $\geq$ \text{val(P)} under Condition (i) (Section \ref{sect-proof-step1}). 
\item Step 2: Proof of \text{val(P)} $\geq$ \text{val(D)} and the existence of a primal optimizer (Section \ref{sect-proof-step2}).  
\item Step 3: Proof of the existence of a dual optimizer (Section \ref{sect-proof-step3}). 
\end{itemize} 
The inequality in Step 1 is called {\it weak duality} and is relatively easy to verify.
The proof of Steps 2 is similar to that of Fenchel--Rockafellar duality (see Theorem 1.9 of \cite{villani2021topics}).

The main methodological innovation of our proof appears in Step 3.
Rather than following the standard approach of establishing coercivity of the dual objective or deriving model-specific bounds on prices, we use a different argument. 
\cites{hoffman1952approximate} error bound theorem converts the geometry of the
finite-side polyhedron into an exact penalty parameter, which in turn
identifies a bounded set of economically meaningful multipliers.
\cites{sion1958general} minimax theorem then transfers this finite-dimensional compactness to the dual problem.

\subsection{Proof of \text{val(D)} $\geq$ \text{val(P)} under Condition (i)} \label{sect-proof-step1}
Suppose that Condition (i) holds. 
Take $\pi$ satisfying the constraints of (P). Take $u, p$ satisfying the constraints of (D). Then, 
\begin{align*}
\int \Phi(t, x) d \pi & \leq \int\left\{u(t)+\left(A_x\right)^{\top} \cdot p+\left(B_x\right)^{\top} \cdot q\right\} d \pi \\
& =\int u(t) d \pi+\int\left(A_x\right)^{\top} \cdot p d \pi+\int\left(B_x\right)^{\top} \cdot q d \pi \\
& =\int u(t) d \pi^T+\left(A \pi^X\right)^{\top} \cdot p+\left(B \pi^X\right)^{\top} \cdot q \\
& \leq \int u(t) d \tau+c^{\top} \cdot p+d^{\top} \cdot q, 
\end{align*}
where the first inequality follows from the fact that $u, p$ satisfy the constraints of (D), and the last inequality follows from the fact that $\pi$ satisfies the constraints of (P). 
Taking the supremum of $\pi$ and the infimum of $u, p$, the desired inequality follows.

\subsection{Proof of \text{val(P)} $\geq$ \text{val(D)} and the existence of primal optimizer} 
\label{sect-proof-step2}
If Condition (i) holds, then by \text{val(D)} $\geq$ \text{val(P)} (which is proved in Section \ref{sect-proof-step1}), the dual optimal value is finite. 
This finiteness of the dual optimal value also holds under (ii). 
Therefore, if either Condition (i) or (ii) holds, the dual optimal value is finite, i.e.,  $\text{val(D)}>-\infty$.

Let $C(T\times X)$ denote the set of real-valued continuous functions on $T\times X$. 
We endow this space with the supremum norm, i.e., for each $f \in C(U\times X)$, $\|f\|_{\infty}:=\sup _{(t,x) \in T\times X}|f(t,x)|$. 
We also endow $C(U\times X)\times \mathbb{R}$ with the topology induced from the norm $\|(\psi, r)\|:=\max \left(\|\psi\|_{\infty},|r|\right)$.
We define 
\begin{align*}
F=\{(f, \alpha) \in C(T \times X) \times \mathbb{R} \mid f(u, x) \geq \Phi(u, x), \alpha>0\}. 
\end{align*} 
We also define
\begin{align*}
G=\{(g, \beta) \in C(T \times X) \times \mathbb{R} \mid & \begin{array}{l}
\exists u \in C(T), \exists p \in \mathbb{R}_{+}^k, \exists q \in \mathbb{R}^{\ell} \text { s.t. } \\
g(t, x)=u(t)+\left(A_x\right)^{\top} \cdot p+\left(B_x\right)^{\top} \cdot q
\end{array} \\
& \left.\beta \leq \text{val(D)}-\int u(t) d \tau-c^{\top} \cdot p-d^{\top} \cdot q\right\}. 
\end{align*}
It is clear that $F$ and $G$ are convex sets. It is also clear that the interiors of $F$ is nonempty because any $(f, \alpha)\in C(T \times X) \times \mathbb{R}$ with $f(t,x)>\Phi(t,x)$ for all $(t,x)\in T\times X$ and any $\alpha>0$ form an interior point. Moreover, $F$ and $G$ are disjoint; suppose, for contradiction, that $(f, \alpha)\in F\cap G$. Since $f$ satisfies the membership condition of $G$, $f$ is representable by using $u \in C(T), p \in \mathbb{R}_{+}^k, q \in \mathbb{R}^{\ell}$. By the membership condition of $F$, we have that $u,p,q$ satisfy the dual constraints. Since $\alpha$ satisfies the membership condition of $F$, $\alpha>0$. Together with the fact that $\alpha$ satisfies the membership condition of $G$, 
\begin{align*}
0<\alpha\leq \text{val(D)}-\int u(t) d \tau-c^{\top} \cdot p-d^{\top} \cdot q. 
\end{align*}
We obtain a contradiction to the fact that $\text{val(D)}$ is the dual optimal value. Therefore, $F$ and $G$ are disjoint. We now apply: 

\begin{theorem}[Hahn-Banach (see, e.g., Theorem 5.67 of \cite{AliprantisBorder2006})] 
In any topological vector space, if the interiors of a convex set $F$ is nonempty and is disjoint from another nonempty convex set $G$, then $F$ and $G$ can be properly separated by a nonzero continuous linear functional $\phi$, i.e.,
\begin{align}
\phi(f, \alpha) \geq \phi(g, \beta) \quad \forall(f, \alpha) \in F, \quad \forall(g, \beta) \in G, 
\label{prf-eq1} 
\\
\exists\left(f^{\prime}, \alpha^{\prime}\right) \in F, \quad \exists\left(g^{\prime}, \beta^{\prime}\right) \in G \quad \text{s.t.} \quad \phi\left(f^{\prime}, \alpha^{\prime}\right)>\phi\left(g^{\prime}, \beta^{\prime}\right).
\label{prf-eq2} 
\end{align}
\end{theorem}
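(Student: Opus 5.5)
The statement is the classical separation theorem for convex sets in a real topological vector space, and I would prove it by reducing it to the analytic Hahn--Banach extension theorem through a Minkowski gauge. Write $E:=C(T\times X)\times\mathbb{R}$, or more generally any real topological vector space. Let $F^\circ$ denote the interior of $F$, which is nonempty by hypothesis. Since $F$ is convex, $F^\circ$ is convex and open.

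\emph{Step 1: an open convex set avoiding the origin.} Form the algebraic difference $U:=F^\circ-G=\{f-g: f\in F^\circ,\ g\in G\}$.
\begin{itemize}
\item $U$ is convex, because it is the difference of two convex sets.
\item $U$ is open, because it is the union $\bigcup_{g\in G}(F^\circ-g)$ of translates of an open set.
\item $U$ is nonempty, because $G\neq\emptyset$.
\item $0\notin U$, because $F^\circ\subseteq F$ and $F\cap G=\emptyset$.
\end{itemize}

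\emph{Step 2: a functional negative on $U$.} Fix $u_0\in U$ and set $W:=u_0-U$. Then $W$ is an open convex neighbourhood of $0$ with $u_0\notin W$. Its Minkowski functional $\rho_W$ is a finite sublinear functional, and $\rho_W(u_0)\geq 1$. Moreover, because $W$ is open, $W=\{\rho_W<1\}$ and $\rho_W$ is continuous at $0$.

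Define $\phi_0(\lambda u_0):=\lambda$ on $\mathbb{R}u_0$. This satisfies $\phi_0\leq\rho_W$ there: for $\lambda\geq0$ this follows from $\rho_W(u_0)\geq1$ and positive homogeneity, and for $\lambda<0$ from $\rho_W\geq0$. The analytic Hahn--Banach theorem then extends $\phi_0$ to a linear $\psi$ on $E$ with $\psi\leq\rho_W$.

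The bound $\psi\leq\rho_W$ together with $-\psi(w)=\psi(-w)\leq\rho_W(-w)$ shows that $\psi$ is bounded on the neighbourhood $W\cap(-W)$. Hence $\psi$ is continuous, and it is nonzero since $\psi(u_0)=1$.

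For $u\in U$ we have $u_0-u\in W$, so $\psi(u_0-u)<1$, which gives $\psi(u)>0$. Setting $\phi:=\psi$, this yields $\phi(f)>\phi(g)$ for all $f\in F^\circ$ and $g\in G$.

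\emph{Step 3: passing from $F^\circ$ to $F$ and obtaining properness.} I expect this to be the main point needing care, since the separation so far holds only on the interior. The key fact is that a convex set with nonempty interior satisfies $F\subseteq\overline{F^\circ}$. Indeed, for $f\in F$ and $f_1\in F^\circ$, the points $(1-\lambda)f+\lambda f_1$ lie in $F^\circ$ for $\lambda\in(0,1]$, by the standard segment lemma, and they converge to $f$ as $\lambda\downarrow0$. By continuity of $\phi$, we get $\phi(f)\geq\sup_G\phi$ for every $f\in F$, which is \eqref{prf-eq1}.

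For \eqref{prf-eq2}, take any $f'\in F^\circ\subseteq F$ and any $g'\in G$. Step 2 gives the strict inequality $\phi(f')>\phi(g')$ directly.

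An alternative argument for properness: if $\phi$ were constant on all of $F\cup G$, it would be constant on the open set $F^\circ$, forcing $\phi=0$. That contradicts $\phi\neq0$.
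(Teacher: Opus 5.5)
Your proof is correct and is essentially the argument the paper relies on. The paper gives no proof of its own but cites Theorem 5.67 of Aliprantis and Border, and its standard proof follows your route: the open convex set $F^\circ-G$ avoids $0$, a Minkowski gauge plus the analytic Hahn--Banach extension gives a continuous $\phi>0$ on $F^\circ-G$, the inclusion $F\subseteq\overline{F^\circ}$ passes the inequality to all of $F$, and properness comes from the interior. Note that Step 1 needs only $F^\circ\cap G=\emptyset$, which is how the cited hypothesis is phrased, so your appeal to $F\cap G=\emptyset$ is harmless but not required.
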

Set 
\begin{align*}
\tilde{\phi}(f):=\phi(f, 0) \quad (\forall f \in C(T \times X)), \qquad \hat{\phi}(\alpha)=\phi(\mathbf{0}, \alpha) \quad (\forall \alpha \in \mathbb{R}).
\end{align*} 
Since $\hat{\phi}$ is a linear function on $\mathbb{R}$, there exists $\lambda \in \mathbb{R}$ such that $\hat{\phi}(\alpha)=\lambda \cdot \alpha$ for all $\alpha \in \mathbb{R}$. 
Then, (\ref{prf-eq1}) and (\ref{prf-eq2}) are rewritten as follows: 
\begin{align}
\tilde{\phi}(f)+\lambda \alpha \geq \tilde{\phi}(g)+\lambda \beta \quad \forall(f, \alpha) \in F, \quad \forall(g, \beta) \in G, \label{prf-eq3} \\
\exists\left(f^{\prime}, \alpha^{\prime}\right) \in F, \quad \exists\left(g^{\prime}, \beta^{\prime}\right) \in G \quad \text{s.t.} \quad \tilde{\phi}\left(f^{\prime}\right)+\lambda \alpha^{\prime}>\tilde{\phi}\left(g^{\prime}\right)+\lambda \beta^{\prime}. \label{prf-eq4} 
\end{align}

\begin{claim} \label{cl1} 
It holds that $\lambda\geq 0$.
\end{claim}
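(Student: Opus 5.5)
The plan is to test inequality \eqref{prf-eq3} against a pair in which the $F$-component is held fixed and the $\alpha$-coordinate is sent to infinity. That pins down the sign of $\lambda$ from the $F$ side, without any manipulation on the $G$ side.

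First I fix one element of $G$. Take the dual feasible point $(u_0,0,0)$ from Section~\ref{sect-MKdual}, with $u_0(t)=\max_{x}\Phi(t,x)$, and set $g_0(t,x):=u_0(t)$ and $\beta_0:=\operatorname{val}(\mathrm{D})-\int u_0\,d\tau$. This is a real number because $\operatorname{val}(\mathrm{D})$ is finite, as established at the start of this subsection. Hence $(g_0,\beta_0)\in G$, so $G$ is nonempty.

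Next I fix one function in $F$, for instance $f_0:=g_0$. Since $g_0\geq\Phi$ pointwise, we have $(f_0,\alpha)\in F$ for every $\alpha>0$. Applying \eqref{prf-eq3} to $(f_0,\alpha)$ and $(g_0,\beta_0)$ gives
\[
\tilde\phi(f_0)+\lambda\alpha\ \geq\ \tilde\phi(g_0)+\lambda\beta_0\qquad\text{for all }\alpha>0 .
\]
The right-hand side does not depend on $\alpha$.

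Now suppose, for contradiction, that $\lambda<0$. Letting $\alpha\to+\infty$, the left-hand side tends to $-\infty$ while the right-hand side stays a fixed finite number. This contradicts the inequality, so $\lambda\geq 0$.

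The only real care point is the nonemptiness of $G$, which is where finiteness of $\operatorname{val}(\mathrm{D})$ enters. A secondary check is that $\phi(f,\alpha)=\tilde\phi(f)+\lambda\alpha$ by linearity of $\phi$, which justifies the decomposition used in \eqref{prf-eq3}. No other obstacle is expected.
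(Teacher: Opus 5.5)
Your proof is correct and is the paper's argument: assume $\lambda<0$, hold the other terms of \eqref{prf-eq3} fixed, and let $\alpha\to+\infty$ so that the left-hand side diverges to $-\infty$. The paper states this in one line; you also write down a concrete element of $G$ (built from $(u_0,0,0)$, using finiteness of $\operatorname{val}(\mathrm{D})$) and a fixed $f_0\geq\Phi$, which the paper uses without saying so.
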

\begin{proof}
Suppose $\lambda<0$. By letting $\alpha \rightarrow+\infty$, the LHS of (\ref{prf-eq3}) goes to $-\infty$, contradicting (\ref{prf-eq3}). 
\end{proof} 

\begin{claim} \label{cl2}
$\tilde{\phi}$ is a positive continuous linear functional, i.e., if $f \geq 0$ (i.e., $f(t, x) \geq \mathbf{0}$ for all $(t, x) \in T \times X$), then $\tilde{\phi}(f) \geq 0$.
\end{claim}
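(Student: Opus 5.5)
The plan is to argue by contradiction, exploiting the fact that $F$ is an upward-closed set in its first coordinate. Suppose there is some $f_0\in C(T\times X)$ with $f_0\geq 0$ but $\tilde\phi(f_0)<0$. Fix any $(f,\alpha)\in F$, for instance $f=u_0\otimes 1$ with $u_0(t)=\max_x\Phi(t,x)$ and $\alpha=1$. Also fix any $(g,\beta)\in G$; $G$ is nonempty, since we may take $u=0$, $p=0$, $q=0$, $g=0$ and $\beta=\operatorname{val}(\mathrm{D})$, which is finite by the first paragraph of this subsection.

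For every $n\in\mathbb{N}$, the function $f+nf_0$ still satisfies $f+nf_0\geq\Phi$, so $(f+nf_0,\alpha)\in F$. Inequality \eqref{prf-eq3} then gives
\[
\tilde\phi(f)+n\tilde\phi(f_0)+\lambda\alpha\geq\tilde\phi(g)+\lambda\beta .
\]
Here linearity of $\phi$ is used to split $\phi(f+nf_0,\alpha)=\tilde\phi(f)+n\tilde\phi(f_0)+\lambda\alpha$, since $\phi(f,\alpha)=\phi(f,0)+\phi(\mathbf 0,\alpha)$. Letting $n\to\infty$, the left-hand side tends to $-\infty$ while the right-hand side is a fixed real number. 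This is a contradiction, so $\tilde\phi(f_0)\geq 0$.

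Continuity and linearity of $\tilde\phi$ are immediate, because $\tilde\phi$ is the restriction of the continuous linear functional $\phi$ to the closed subspace $C(T\times X)\times\{0\}$. The only point requiring care is the decomposition of $\phi$ into $\tilde\phi$ and $\hat\phi$, together with checking that $F$ and $G$ are nonempty; I do not expect a genuine obstacle. The argument mirrors Claim~\ref{cl1}: it uses only that $F$ is a cone-translate in the direction of nonnegative functions, and it does not require $\lambda\geq0$.
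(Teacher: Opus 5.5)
Your proof is correct and uses the same idea as the paper: move along a direction that stays inside $F$, so that the left-hand side of \eqref{prf-eq3} diverges to $-\infty$ while the right-hand side stays fixed. The only difference is that you translate a fixed element of $F$ by $nf_0$, which works for any $f_0\geq 0$ directly, whereas the paper first uses continuity of $\tilde{\phi}$ to replace $f'$ by a strictly positive $f''$ so that $\gamma' f''$ eventually dominates $\Phi$; your version skips that perturbation step.
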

\begin{proof}
Suppose, for contradiction, that there exists $f^{\prime} \geq 0$ with $\tilde{\phi}\left(f^{\prime}\right)<0$. By continuity of $\tilde{\phi}$ with respect to the supremum norm topology, there exists $f^{\prime \prime}$ such that $f^{\prime \prime}(t, x)>0$ for all $(t, x) \in T \times X$ and $\tilde{\phi}\left(f^{\prime \prime}\right)<0$. Then, there exists a  sufficiently large $\gamma>0$ scuh that $\left(\gamma^{\prime} f^{\prime \prime}, 1\right) \in F$ for all 
$\gamma^{\prime} \geq \gamma$. Letting $\gamma^{\prime} \rightarrow+\infty$, we have $\tilde{\phi}\left(\gamma^{\prime} f^{\prime \prime}\right)=\gamma^{\prime} \tilde{\phi}\left(f^{\prime \prime}\right) \rightarrow-\infty$.
Then, the LHS of (\ref{prf-eq3}) goes to $-\infty$, contradicting (\ref{prf-eq3}).
\end{proof} 

\begin{claim}\label{cl3} 
It holds that $\lambda >0$.
\end{claim}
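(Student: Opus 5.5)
We argue by contradiction and suppose $\lambda=0$, given Claim~\ref{cl1}. Then \eqref{prf-eq3} reduces to $\tilde\phi(f)\geq\tilde\phi(g)$ for every $(f,\alpha)\in F$ and every $(g,\beta)\in G$. The strict inequality \eqref{prf-eq4} says that $\tilde\phi(f')>\tilde\phi(g')$ for some such pair. In particular, $\tilde\phi$ is not identically zero: otherwise both sides of \eqref{prf-eq4} would vanish. The plan is to exploit the structure of $G$ to show that $\tilde\phi$ must be the zero functional, contradicting \eqref{prf-eq4}. The key point is that the $g$-component of $G$ is a cone that is rich enough to be tested against the right-hand side.

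First observe that for any $u\in C(T)$, $p\in\mathbb{R}^k_+$ and $q\in\mathbb{R}^\ell$, the function $g(t,x)=u(t)+A_x\cdot p+B_x\cdot q$ belongs to the $g$-projection of $G$: simply choose $\beta$ very negative. Fix any $(f,\alpha)\in F$, for instance $f=u_0+1$ with $u_0(t)=\max_x\Phi(t,x)$ and $\alpha=1$. Then $\tilde\phi(g)\leq\tilde\phi(f)$ for all such $g$. Since the set of such $g$ contains the linear subspace $\{u(t)+B_x\cdot q\}$ together with the cone $\{A_x\cdot p: p\geq 0\}$, linearity forces the following. First, $\tilde\phi(u\otimes 1)=0$ for every $u\in C(T)$, where $(u\otimes1)(t,x):=u(t)$. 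Second, $\tilde\phi(B_\cdot\cdot q)=0$ for every $q$. Third, $\tilde\phi(A_\cdot\cdot p)\leq0$ for every $p\geq0$. Otherwise, scaling $u$, $q$ or $p$ would send $\tilde\phi(g)$ to $+\infty$.

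Next we use positivity (Claim~\ref{cl2}). For any $h\in C(T\times X)$, set $\bar h(t):=\max_x|h(t,x)|$, which is continuous. Then $\bar h\otimes 1\pm h\geq 0$, so Claim~\ref{cl2} gives $|\tilde\phi(h)|\leq\tilde\phi(\bar h\otimes1)$. By the first consequence above, the right-hand side is zero. Hence $\tilde\phi\equiv0$ on $C(T\times X)$.

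Finally, \eqref{prf-eq4} with $\lambda=0$ reads $\tilde\phi(f')>\tilde\phi(g')$, that is, $0>0$, which is a contradiction. Therefore $\lambda>0$. The only delicate step is the first one: we must check that $G$ really contains arbitrary $u\in C(T)$ with arbitrarily negative $\beta$. This holds because the constraint on $\beta$ is only an upper bound, and it is finite because $\operatorname{val}(\mathrm{D})>-\infty$, as established at the start of Section~\ref{sect-proof-step2}. The argument also needs the nonemptiness of $F$, which was already noted there.
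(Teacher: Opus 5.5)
Your argument is correct and uses the same mechanism as the paper: positivity of $\tilde\phi$ (Claim~\ref{cl2}), together with the fact that every $h\in C(T\times X)$ is dominated by a continuous function of $t$ alone (built from $\max_{x}$). Such a function lies in the $g$-projection of $G$ with $p=q=0$ and $\beta$ sufficiently negative. The difference is only in how the contradiction is organized: the paper forms $h=2f'-g'$ with $\tilde\phi(h)>\tilde\phi(f')$ and dominates it by $h^T$ to contradict \eqref{prf-eq3}, whereas you first show that $\tilde\phi$ vanishes on the subspace of $x$-independent functions, hence everywhere by domination, and then contradict \eqref{prf-eq4}.
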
 
\begin{proof}
Suppose, for contradiction, that $\lambda \leq 0$. By Claim 1, $\lambda=0$. 
By (\ref{prf-eq4}), there exist $f^{\prime} \in C(T\times X)$ and $g^{\prime} \in C(T\times X)$ such that 
%\begin{align*}
$\tilde{\phi}\left(f^{\prime}\right)>\tilde{\phi}\left(g^{\prime}\right)$.
%\end{align*} 
Since $\tilde{\phi}$ is linear, we have $\tilde{\phi}\left(f^{\prime}-g^{\prime}\right)>0$. 
Again by linearity, 
\begin{align}
\tilde{\phi}\left(2f^{\prime}-g^{\prime}\right)=\tilde{\phi}\left(f'+(f^{\prime}-g^{\prime})\right)>\tilde{\phi}\left(f^{\prime}\right).
\label{cl3-eq1} 
\end{align}
We define $h\in C(U\times X)$ by $h:=2f'-g'$. We further define $h^T: T \rightarrow \mathbb{R}$ by
\begin{align*}
h^T(t)=\max \{h(t, x) \mid x \in X\} \quad \forall t\in T. 
\end{align*}
Note that $h^T$ is continuous because $h$ is continuous and $X$ is finite. 
Letting $h'(t,x):=h^T(t)+\left(A_x\right)^{\top} \cdot \mathbf{0}+\left(B_x\right)^{\top} \cdot \mathbf{0}$, we have $(h', \beta)\in G$ for $\beta\leq m-\int h^T(t)d\tau$. Moreover, since $h'(t,x)\geq h(t,x)$ for all $(t,x)\in T\times X$, we have $h'-h\geq \mathbf{0}$. By Claim \ref{cl2}, $\tilde{\phi}(h'-h)\geq 0$. Therefore, 
\begin{align}
\tilde{\phi}\left(h^{\prime}\right)=\tilde{\phi}\left(h+\left(h^{\prime}-h\right)\right) \geq \tilde{\phi}\left(h\right). 
\label{cl3-eq2} 
\end{align}
By combining (\ref{cl3-eq1}) and (\ref{cl3-eq2}), we obtain
\begin{align*}
 \tilde{\phi}\left(h^{\prime}\right)> \tilde{\phi}\left(f^{\prime}\right). 
 \end{align*} 
 This strict inequality yields a contradiction to (\ref{prf-eq3}). 
\end{proof} 
Setting $\phi^{\prime}:=\frac{1}{\lambda} \cdot \tilde{\phi}$ (recall $\lambda>0$ by Claim \ref{cl3}), (\ref{prf-eq3}) is rewritten as 
\begin{align}
\phi^{\prime}(f)+\alpha \geq \phi^{\prime}(g)+\beta \quad \forall(f, \alpha) \in F, \quad \forall(g, \beta) \in G.
\label{prf-eq5} 
\end{align}
Since $\phi'$ is a positive continuous linear functional (recall Claim \ref{cl2}), by Riesz-Markov theorem (see, e.g., Theorem 14.12 of \cite{AliprantisBorder2006}), there exists a Borel measure $\pi^*$ on $T\times X$ such that
\begin{align*}
\phi'(f)=\int f d\pi^* \quad \forall f\in C(T\times X). 
\end{align*}  
Substituting this equation into (\ref{prf-eq5}), 
\begin{align}
\int f d\pi^*+\alpha \geq \int g d\pi^*+\beta \quad \forall(f, \alpha) \in F, \quad \forall(g, \beta) \in G.
\label{prf-eq6} 
\end{align}
Substituting $(\Phi, \alpha)\in F$ with $\alpha>0$ into the left-hand side of (\ref{prf-eq6}) and letting $\alpha \rightarrow 0$, 
\begin{align}
\int \Phi d\pi^* \geq \int g d\pi^*+\beta \quad \forall(g, \beta) \in G.
\label{prf-eq6.5} 
\end{align}
Combining this inequality with the definition of $G$, 
we have the following chain of inequalities: 
\begin{flalign}
\int \Phi d \pi^* \geq \int\left(u(t)+\left(A_x\right)^{\top} \cdot p+\left(B_x\right)^{\top} \cdot q\right) d \pi^*+\text{val(D)}-\int u(t) d \tau-c^{\top} \cdot p-d^{\top} \cdot q  \tag*{}  \\
 \forall u \in C(T), \quad p \in \mathbb{R}_{+}^k, \quad q \in \mathbb{R}^{\ell}, \tag*{} \\
 \int \Phi d \pi^*-\int\left(u(t)+\left(A_x\right)^{\top} \cdot p+\left(B_x\right)^{\top} \cdot q\right) d \pi^*+\int u(t) d \tau+c^{\top} \cdot p+d^{\top} \cdot q \geq \text{val(D)} \tag*{} \\
 \forall u \in C(T), \quad p \in \mathbb{R}_{+}^k, \quad q \in \mathbb{R}^{\ell}, \tag*{} \\
 \int \Phi d \pi^*+\underbrace{\int u(t) d \tau-\int u(t) d \pi^{* T}}_{\text{(I)}}+\underbrace{c^{\top} \cdot p-\left(A \pi^X\right)^{\top} \cdot p}_{\text{(II)}}+\underbrace{d^{\top} \cdot q-\left(B \pi^X\right)^{\top} \cdot q}_{\text{(III)}} \geq \text{val(D)} \tag*{} \\
  \forall u \in C(T), \quad p \in \mathbb{R}_{+}^k, \quad q \in \mathbb{R}^{\ell}. \label{prf-eq7} 
\end{flalign} 
We show that $\pi^*$ satisfies the primal constraints. Suppose, for contradiction, that there exists a Borel subset
$S\subseteq T$ such that $\pi^{*T}(S)\neq \tau(S)$. 
We assume $\pi^{*T}(S)>\tau(S)$ (the other direction can be dealt with analogously). 
By Theorem 15.1 of \cite{AliprantisBorder2006} (p.506), $S$ can be chosen as an open set. For $r\geq 1$, let $u^r:T\rightarrow \mathbb{R}$ be $r$ times the indicator function of $S$, i.e., $u^r(t)=r$ if $t\in S$ and $0$ otherwise.
Since $S$ is open, $u^r=r\mathbf{1}_S$ is lower semicontinuous. Hence, by Baire's theorem (see, e.g., Theorem 4.28 of \cite{Yao2023MK}), there exists a sequence $(u_n^r)\subset C(T)$ such that
\[
0\leq u_n^r\uparrow u^r
\quad\text{pointwise on }T.
\]
By the monotone convergence theorem, the integrals of $u_n^r$ with respect to both $\tau$ and $\pi^{*T}$ converge to the corresponding integrals of $u^r$. Therefore, for $n$ sufficiently large, $\widetilde{u}^r:=u_n^r$ satisfies 
\begin{align*}
\left|\int \tilde{u}^r(t) d \tau-\int u^r(t) d \tau\right|<\varepsilon, \quad \left|\int \tilde{u}^r(t) d \pi^{* T}-\int u^r(t) d \pi^{* T}\right|<\varepsilon, 
\end{align*}
Substituting $\tilde{u}^r$ into $u$ into (\ref{prf-eq7}), 
\begin{align*}
\text{(I)}&=\int \tilde{u}^r(t) d \tau-\int \tilde{u}^r(t) d \pi^{* T} \\
&<\int u^r(t) d \tau+\varepsilon-\int u^r(t) d \pi^{* T}+\varepsilon \\
&= r\cdot (\tau(S)-\pi^{*T}(S))+2\varepsilon<0. 
\end{align*}
Letting $r\rightarrow +\infty$, (I) goes to $-\infty$, which contradicts (\ref{prf-eq7}). 
Therefore, we obtain $\pi^{*T}(S)=\tau(S)$ for all Borel sets $S\subseteq T$. 

If $(A\cdot \pi^{*X})_i>a_i$ for some $i=1, \dots, k$, then by letting $p_i\rightarrow +\infty$, (II) in (\ref{prf-eq7}) goes to $-\infty$, which contradicts (\ref{prf-eq7}). Therefore, we obtain $A\cdot \pi^X\leq a$.  
Similarly, if $(B\cdot \pi^{*X})_j\neq b_j$ for some $j=1, \dots, \ell$, then by letting $p_j\rightarrow +\infty$ (if $(B\cdot \pi^X)_j>b_j$) or $p_j\rightarrow -\infty$  (if $(B\cdot \pi^X)_j< b_j$), (III) in (\ref{prf-eq7}) goes to $-\infty$, which contradicts (\ref{prf-eq7}). Therefore, we obtain $B\cdot \pi^X=b$. 

It follows that $\pi^*$ satisfies the primal constraints.
By \eqref{prf-eq7}, we obtain
\[
\text{val(P)} \geq \int \Phi\, d\pi^* \geq \text{val(D)}.
\]
Moreover, the existence of a matching satisfying the primal constraints implies that Condition (i) holds, which in turn implies the weak duality inequality
\text{(D)} $\geq$ \text{(P)} 
(see Section \ref{sect-proof-step1}).
Combining these inequalities, we have
\[
\text{val(P)}=\int \Phi\, d\pi^*=\text{val(D)},
\]
and hence $\pi^*$ is a solution to the primal problem.
\qed

\subsection{Proof of the existence of dual optimizer} 
\label{sect-proof-step3} 

Let $n:=|X|$ and identify $\R^X$ with $\R^n$.  Define the simplex
\[
\Delta_X
:=
\left\{
 r\in\R_+^X:\ \sum_{x\in X}r_x=1
\right\}
\]
and the set of feasible $X$-marginals
\[
R
:=
\left\{
 r\in\Delta_X:\ Ar\le a,\ Br=b
\right\}.
\tag{1}\label{eq:R-def}
\]
Since (P) has a feasible solution,\footnote{In Step \ref{sect-proof-step2} we proved that, if Condition (i) or (ii) holds, there exists a solution to the primal problem. Therefore, (P) has a feasible solution. } $R$ is nonempty.  It is clear that $R$ is compact.

For $r\in\Delta_X$, let
\[
\Pi(r)
:=
\left\{
 \pi\in\Pi : \ \pi^T=\tau,\ \pi^X=r
\right\}.
\]
This set is nonempty because it contains the product measure
$\tau\otimes r$.  Define
\[
W(r)
:=
\max_{\pi\in\Pi(r)}
\int_{T\times X}\Phi(t,x)\,d\pi(t,x).
\tag{2}\label{eq:W-def}
\]

The maximum in \eqref{eq:W-def} is always attained. Indeed, the maximization problem in \eqref{eq:W-def} is a special case of the primal problem (P)
with no inequality constraints and the single equality constraint $\pi^X=r$. As established in Section \ref{sect-proof-step2}, a maximizer exists whenever a primal-feasible matching exists (i.e., Condition (i) holds).\footnote{Alternatively, one can prove the existence of a maximizer of \eqref{eq:W-def} using weak topology. $T\times X$ is a compact metric space, so $\Pi$ is compact under weak convergence (Prokhorov's theorem). The marginal restrictions defining $\Pi(r)$ are weakly closed, and the map $\pi\mapsto\int\Phi\,d\pi$ is weakly continuous because $\Phi$ is continuous and bounded. The extreme-value theorem implies the existence of a maximizer. See, for example, Theorem~1.4 of \citet{santambrogio2015optimal}
for the corresponding compactness-and-attainment argument for optimal transport problems. }

The function $W$ is concave.  To see this, take $r,r'\in\Delta_X$,
$\lambda\in[0,1]$, and maximizers 
$\pi\in\Pi(r)$ and $\pi'\in\Pi(r')$.  Then
\[
\lambda\pi+(1-\lambda)\pi'
\in
\Pi\bigl(\lambda r+(1-\lambda)r'\bigr),
\]
and therefore
\[
W\bigl(\lambda r+(1-\lambda)r'\bigr)
\ge
\lambda W(r)+(1-\lambda)W(r').
\]

We next establish a Lipschitz bound for $W$.  Set
\[
M:=\|\Phi\|_\infty
=
\max_{(t,x)\in T\times X}|\Phi(t,x)|<\infty.
\]
\begin{claim} \label{claim-Lipschitz}
It holds that 
\[
|W(r)-W(r')|
\le M\|r-r'\|_1
\qquad
\textnormal{for all }r,r'\in\Delta_X.
\tag{4}\label{eq:W-Lipschitz}
\]
\end{claim}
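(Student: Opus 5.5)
To prove Claim~\ref{claim-Lipschitz}, I will start from a maximizer for $r$ and build a matching in $\Pi(r')$ that loses at most $M\|r-r'\|_1$ in surplus. The idea is to keep as much of the optimal matching as the new marginal allows and reassign only the excess mass. By symmetry it suffices to show $W(r')\ge W(r)-M\|r-r'\|_1$.

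\emph{Step 1: the common part.} Fix $r,r'\in\Delta_X$ and a maximizer $\pi\in\Pi(r)$ of \eqref{eq:W-def}, which exists by the attainment argument above. Set $m_x:=\min\{r_x,r'_x\}$ and $\delta:=\sum_x (r_x-m_x)=\tfrac12\|r-r'\|_1$. This equals $\sum_x(r'_x-m_x)$ because both vectors sum to one. For each $x$ with $r_x>0$, let $\pi_x:=\pi(\cdot\times\{x\})$, a measure on $T$ of mass $r_x$. Keep the scaled piece $(m_x/r_x)\pi_x$ at $x$, and release the remainder $\rho:=\sum_x(1-m_x/r_x)\pi_x$, a measure on $T$ of total mass $\delta$.

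\emph{Step 2: reassign the released mass.} If $\delta=0$ there is nothing to do. Otherwise, distribute the released mass proportionally among the deficits by defining
\[
\pi'(S\times\{x\}):=\frac{m_x}{r_x}\pi_x(S)+\frac{r'_x-m_x}{\delta}\rho(S),
\]
with the first term read as $0$ when $r_x=0$. Summing over $x$ recovers the $T$-marginal $\sum_x\pi_x=\tau$. The $X$-marginal is $m_x+(r'_x-m_x)=r'_x$, so $\pi'\in\Pi(r')$.

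\emph{Step 3: compare surpluses.} The matchings $\pi$ and $\pi'$ agree on the kept part. They differ only on mass $\delta$ removed from $\pi$ and mass $\delta$ added to $\pi'$. Since $|\Phi|\le M$, we get
\[
\int\Phi\,d\pi-\int\Phi\,d\pi'\le M\delta+M\delta=M\|r-r'\|_1 .
\]
Hence $W(r')\ge\int\Phi\,d\pi'\ge W(r)-M\|r-r'\|_1$, and swapping the roles of $r$ and $r'$ gives \eqref{eq:W-Lipschitz}.

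The main thing to watch is that the construction is measure-theoretically legitimate. The pieces $\pi_x$ and $\rho$ are finite Borel measures on $T$, so $\pi'$ is a Borel probability measure. The edge cases $r_x=0$ and $\delta=0$ need separate handling, as noted above. There is no real obstacle beyond this bookkeeping.
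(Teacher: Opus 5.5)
Your proof is correct and is essentially the paper's argument: since $1-m_x/r_x=(r_x-r'_x)_+/r_x$, your $\pi'$ coincides exactly with the paper's $\widetilde\pi$, which is obtained by pushing $\pi$ through the row-stochastic kernel $K_{xy}=\gamma_{xy}/r_x$ with $\gamma_{xy}=\min\{r_x,r'_x\}\mathbf{1}_{\{x=y\}}+(r_x-r'_x)_+(r'_y-r_y)_+/\delta$. The paper bounds the loss via the lifted coupling $\zeta$ on $T\times X\times X$, using $|\Phi(t,y)-\Phi(t,x)|\le 2M$ on the off-diagonal mass $\delta$; this gives the same $2M\delta=M\|r-r'\|_1$ estimate you obtain by bounding the removed and added pieces separately.
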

\begin{proof} 
Fix $r,r'\in\Delta_X$.  For each $x\in X$, define
\[
d_x:=\min\{r_x,r'_x\},\qquad
e_x:=(r_x-r'_x)_+,\qquad
f_x:=(r'_x-r_x)_+,
\tag{5}\label{eq:def-d-e-f}
\]
where, for a real number $z$, we define $z_+:=\max\{z,0\}$.  Thus
\[
r_x=d_x+e_x,
\qquad
r'_x=d_x+f_x,
\qquad
e_xf_x=0.
\]
Since both $r$ and $r'$ have total mass one,
\[
\sum_{x\in X}e_x
=
\sum_{x\in X}f_x
=:
\delta
=
\frac12\|r-r'\|_1.
\tag{6}\label{eq:delta}
\]

We now construct a matrix
$\gamma=(\gamma_{xy})_{x,y\in X}$.  If $\delta=0$, set
\[
\gamma_{xy}:=r_x\mathbf{1}_{\{x=y\}}.
\]
If $\delta>0$, set
\[
\gamma_{xy}
:=
d_x\mathbf{1}_{\{x=y\}}
+
\frac{e_xf_y}{\delta}.
\tag{7}\label{eq:gamma-construction}
\]
The second term in \eqref{eq:gamma-construction} distributes the row
excess $e_x$ across the deficit columns in the proportions
$f_y/\delta$.  Because $e_xf_x=0$, no additional mass is put on the
diagonal.  For every $x\in X$,
\[
\sum_{y\in X}\gamma_{xy}
=
d_x+\frac{e_x}{\delta}\sum_{y\in X}f_y
=d_x+e_x=r_x,
\]
and, for every $y\in X$,
\[
\sum_{x\in X}\gamma_{xy}
=
d_y+\frac{f_y}{\delta}\sum_{x\in X}e_x
=d_y+f_y=r'_y.
\]
Thus $\gamma$ has row sums $r$ and column sums $r'$.  Moreover,
\[
\sum_{x\in X}\gamma_{xx}
=
\sum_{x\in X}d_x
=1-\delta,
\qquad
\sum_{x\ne y}\gamma_{xy}=\delta.
\tag{8}\label{eq:gamma-diagonal}
\]

We give an example of matrix $\gamma$ 
for $X=\{1,2,3\}$. Let 
\[
r=\left(\frac13,\frac13,\frac13\right),
\qquad
r'=\left(\frac16,\frac12,\frac13\right).
\]
Then
\[
d=\left(\frac16,\frac13,\frac13\right),
\qquad
e=\left(\frac16,0,0\right),
\qquad
f=\left(0,\frac16,0\right),
\qquad
\delta=\frac16.
\]
Thus the only remaining row excess is $1/6$ in row $1$, and the only
remaining column deficit is $1/6$ in column $2$.  Formula
\eqref{eq:gamma-construction} gives
\[
\gamma
=
\begin{pmatrix}
\frac16 & \frac16 & 0\\[2mm]
0 & \frac13 & 0\\[2mm]
0 & 0 & \frac13
\end{pmatrix}.
\]
Its row sums are $r$, its column sums are $r'$, and its off-diagonal
mass is $1/6=\delta$.

For each $x$ with $r_x>0$, define
\[
K_{xy}:=\frac{\gamma_{xy}}{r_x},
\qquad y\in X.
\]
If $r_x=0$, choose the row $(K_{xy})_{y\in X}$ arbitrarily as a
probability vector.  Then $K$ is a row-stochastic matrix.  Let
$\pi\in\Pi(r)$ attain $W(r)$, and define a probability measure
$\zeta$ on $T\times X\times X$ by
\[
\zeta(E\times\{x\}\times\{y\})
:=
K_{xy}\pi(E\times\{x\})
\]
for every Borel set $E\subseteq T$.  Its $(t,x)$-marginal is $\pi$.
Let $\widetilde\pi$ be its $(t,y)$-marginal, that is,
\[
\widetilde\pi(E\times\{y\})
:=
\sum_{x\in X}K_{xy}\pi(E\times\{x\}).
\tag{9}\label{eq:pi-tilde}
\]
The $T$-marginal of $\widetilde\pi$ is $\tau$, and
\[
\widetilde\pi(T\times\{y\})
=
\sum_{x\in X}r_xK_{xy}
=
\sum_{x\in X}\gamma_{xy}
=r'_y.
\]
Hence $\widetilde\pi\in\Pi(r')$.  In addition,
\[
\zeta\bigl(\{(t,x,y):x\ne y\}\bigr)
=
\sum_{x\ne y}\gamma_{xy}
=
\delta.
\]
It follows that
\begin{align*}
W(r')
&\ge
\int_{T\times X}\Phi(t,y)\,d\widetilde\pi(t,y)\\
&=
\int_{T\times X\times X}\Phi(t,y)\,d\zeta(t,x,y)\\
&=
W(r)
+
\int_{T\times X\times X}
\bigl(\Phi(t,y)-\Phi(t,x)\bigr)\,d\zeta(t,x,y)\\
&\ge
W(r)-2M\delta
=
W(r)-M\|r-r'\|_1, 
\end{align*}
where the second equality follows from the fact that $\pi\in\Pi(r)$ (which is the $(t,x)$-marginal of $\zeta$) attains $W(r)$. 
Interchanging $r$ and $r'$ proves \eqref{eq:W-Lipschitz}.  
\end{proof} 
Claim \ref{claim-Lipschitz} states that $W$ is Lipschitz continuous on $\Delta_X$, which implies that $W$ is continuous on $\Delta_X$.

\begin{claim}[\cites{hoffman1952approximate} error bound theorem]\label{clm:hoffman}
Let $C\in\R^{m_1\times n}$, $c\in\R^{m_1}$,
$D\in\R^{m_2\times n}$, and $d\in\R^{m_2}$.  Suppose that
\[
P:=\{z\in\R^n:Cz\le c,\ Dz=d\}
\]
is nonempty.  Define
\[
\dist_1(z,P)
:=
\inf_{w\in P}\|z-w\|_1.
\]
Then there exists a finite constant $H\ge0$ such that
\[
\dist_1(z,P)
\le
H\left(
\|(Cz-c)_+\|_1+\|Dz-d\|_1
\right)
\qquad
\textnormal{for every }z\in\R^n.
\tag{10}\label{eq:hoffman-general}
\]
Here $(Cz-c)_+$ denotes the componentwise positive part.
\end{claim}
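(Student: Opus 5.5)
Hoffman's error bound is classical. I would give a self-contained proof through linear programming duality and the finiteness of the set of vertices of a polyhedron, rather than a compactness argument, since compactness alone does not produce a uniform constant.

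\emph{Reduction to an LP.} Fix $z\in\R^n$. Write $\dist_1(z,P)$ as the optimal value of the linear program
\[
\min_{w,s}\ \mathbf{1}\cdot s\quad\text{subject to}\quad -s\le z-w\le s,\ \ Cw\le c,\ \ Dw=d .
\]
This program is feasible because $P\neq\emptyset$, and its objective is bounded below by $0$. Hence the value is attained and equals $\dist_1(z,P)$.

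\emph{Dual and its vertices.} By finite-dimensional LP duality the primal value equals the dual value. The dual has variables $\lambda\in\R^{m_1}_+$ for $Cw\le c$ and $\mu\in\R^{m_2}$ for $Dw=d$, together with multipliers for the absolute-value constraints. After eliminating the latter, the constraints become $\|C^\top\lambda+D^\top\mu\|_\infty\le1$ and $\lambda\ge0$. I expect the dual value to take the form
\[
\max\bigl\{\,(C^\top\lambda+D^\top\mu)\cdot z-c\cdot\lambda-d\cdot\mu\,\bigr\}
=\max\bigl\{\,\lambda\cdot(Cz-c)+\mu\cdot(Dz-d)\,\bigr\}
\]
over this feasible region $Q$. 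The key observation is that $Q$ is a polyhedron independent of $z$, defined by finitely many inequalities.

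\emph{Bounding the dual variables.} This is the main obstacle. The region $Q$ is generally unbounded: directions $(\lambda,\mu)$ with $C^\top\lambda+D^\top\mu=0$ and $\lambda\ge0$ keep the constraint satisfied. On such directions, however, the objective equals $-c\cdot\lambda-d\cdot\mu$. This quantity is $\le0$ by Farkas' lemma, because $P\neq\emptyset$ (take $w\in P$ and compute $\lambda\cdot Cw+\mu\cdot Dw=0$). Since the dual value is finite, it is attained at a point of the form $v+\text{ray}$. Discarding the ray only weakly increases the objective, so the maximum is attained at one of the finitely many vertices of $Q$, or of $Q$ intersected with a bounded slice if $Q$ has a lineality space. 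I would handle the lineality space by working modulo it, since the objective is constant along it. For this I use the standard decomposition $Q=\operatorname{conv}(V)+\operatorname{cone}(R)$ plus the lineality space, where $V$ is a finite set independent of $z$.

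\emph{Conclusion.} Set $H:=\max_{(\lambda,\mu)\in V}\max\bigl(\|\lambda\|_\infty,\|\mu\|_\infty\bigr)$, which is finite and independent of $z$. Then
\[
\dist_1(z,P)=\max_{V}\bigl\{\lambda\cdot(Cz-c)+\mu\cdot(Dz-d)\bigr\}\le H\Bigl(\|(Cz-c)_+\|_1+\|Dz-d\|_1\Bigr),
\]
using $\lambda\ge0$ to get $\lambda\cdot(Cz-c)\le\|\lambda\|_\infty\|(Cz-c)_+\|_1$. This gives \eqref{eq:hoffman-general}.
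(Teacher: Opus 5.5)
Your proof is correct, but it takes a different route from the paper. The paper does not prove this claim. It cites Hoffman's theorem for pure inequality systems as a black box and only checks that the mixed system reduces to that form, by replacing $Dz=d$ with $Dz\le d$ and $-Dz\le -d$ and using $\|(Dz-d)_+\|_1+\|(-Dz+d)_+\|_1=\|Dz-d\|_1$.

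You derive the bound from scratch. LP duality writes $\dist_1(z,P)$ as the maximum of $\lambda\cdot(Cz-c)+\mu\cdot(Dz-d)$ over $Q=\{(\lambda,\mu):\lambda\ge0,\ \|C^\top\lambda+D^\top\mu\|_\infty\le1\}$. This polyhedron depends on neither $z$ nor $(c,d)$. Nonemptiness of $P$ makes the objective nonpositive on recession directions. It also makes the objective vanish on $\operatorname{lin}(Q)=\{(0,\mu):D^\top\mu=0\}$, since $d\cdot\mu=w\cdot D^\top\mu=0$ for $w\in P$. Hence the maximum is attained on the finitely many vertices of the pointed polyhedron $Q\cap\operatorname{lin}(Q)^\perp$.

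One wording point: that intersection is pointed, not bounded, so the phrase ``bounded slice'' should be dropped. The decomposition $Q=\operatorname{conv}(V)+\operatorname{cone}(R)+\operatorname{lin}(Q)$ that you state is exactly what the argument needs.

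Your route is self-contained and handles the equalities directly through free multipliers rather than by doubling. It also exhibits $H$ as the largest multiplier size over a finite set of dual vertices, depending only on $(C,D)$. This fits how the claim is used in Step 3, where $H$ serves precisely to confine the multipliers $(p,q)$ to the box $Q_\rho$. The paper's citation buys brevity, which is reasonable since the claim is a classical tool rather than a contribution of the paper.
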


The equality-and-inequality formulation in Claim~\ref{clm:hoffman}
follows from the inequality-only form of Hoffman's theorem by replacing
$Dz=d$ with the two systems $Dz\le d$ and $-Dz\le-d$.  Indeed,
\[
\|(Dz-d)_+\|_1+\|(-Dz+d)_+\|_1
=
\|Dz-d\|_1.
\]

We now show that Claim~\ref{clm:hoffman} applies to
$R$ defined in (\ref{eq:R-def}).  Write $R$ as the solution set of
\[
Ar\le a,
\qquad
-r\le0,
\qquad
Br=b,
\qquad
\1^\top r=1.
\tag{11}\label{eq:R-linear-system}
\]
This is a finite system of linear inequalities and equalities in the
finite-dimensional space $\R^n$.  Its solution set is nonempty, as
noted after \eqref{eq:R-def}.  Thus all prerequisites of
Claim~\ref{clm:hoffman} are satisfied.  Applying the claim with
\[
C=
\begin{pmatrix}
A\\-I_n
\end{pmatrix},
\qquad
c=
\begin{pmatrix}
a\\0
\end{pmatrix},
\qquad
D=
\begin{pmatrix}
B\\\1^\top
\end{pmatrix},
\qquad
d=
\begin{pmatrix}
b\\1
\end{pmatrix},
\]
we obtain a finite constant $H\ge0$ such that, for every
$r\in\Delta_X$,
\[
\dist_1(r,R)
\le
Hh(r),
\qquad
h(r)
:=
\|(Ar-a)_+\|_1+\|Br-b\|_1.
\tag{12}\label{eq:hoffman-application}
\]
The residuals associated with $-r\le0$ and $\1^\top r=1$ vanish
because $r\in\Delta_X$.

Choose any
\[
\rho>MH.
\tag{13}\label{eq:rho-choice}
\]
Fix $r\in\Delta_X$.  There exists $\bar r\in R$ such that
\[
\|r-\bar r\|_1=\dist_1(r,R).
\tag{14}\label{eq:nearest-r}
\]
Indeed, $R$ is nonempty and compact, and the function
$w\mapsto\|r-w\|_1$ is continuous on $R$; hence it attains its minimum.
Using \eqref{eq:W-Lipschitz} and \eqref{eq:hoffman-application}, we get
\begin{align*}
W(r)
&\le W(\bar r)+M\|r-\bar r\|_1\\
&\le \text{val(P)}+MHh(r).
\end{align*}
Consequently,
\[
W(r)-\rho h(r)
\le
\text{val(P)}-(\rho-MH)h(r)
\le \text{val(P)}.
\tag{15}\label{eq:penalty-upper-bound}
\]
Since $R$ is compact and $W$ is continuous, there exists
$r^*\in R$ with $W(r^*)=\max_{r\in R}W(r)$.  
It also holds that 
\[
\text{val(P)}=W(r^*).
\label{eq:primal-marginal}
\]
Indeed, for a solution $\hat{\pi}$ to the primal problem (whose existence is established in Section \ref{sect-proof-step2}), letting $\hat{r}:=\hat{\pi}^X\in R$, we have $v^*=W(\hat{r})\leq \max_{r\in R}W(r)=W(r^*)$. 
Conversely, since $r^* \in R$, every matching in $\Pi(r^*)$ is primal feasible, which implies $v^*\geq W(r^*)$.

Combining the above displayed equality with (\ref{eq:penalty-upper-bound}), we obtain $h(r^*)=0$ 
and \eqref{eq:penalty-upper-bound} is exact in the sense that
\[
\text{val(P)}
=
\max_{r\in\Delta_X}
\bigl\{W(r)-\rho h(r)\bigr\}.
\tag{16}\label{eq:exact-penalty}
\]

Define
\[
Q_\rho
:=
[0,\rho]^k\times[-\rho,\rho]^\ell
\]
and, for $r\in\Delta_X$ and $(p,q)\in Q_\rho$, define
\[
\mathcal{L}(r,p,q)
:=
W(r)+p\cdot(a-Ar)+q\cdot(b-Br).
\tag{17}\label{eq:Lagrangian}
\]
For each fixed $r\in\Delta_X$, minimization over the coordinates of
$p$ and $q$ gives
\begin{align*}
\min_{0\le p_i\le\rho}
 p_i\bigl(a_i-(Ar)_i\bigr)
&=
-\rho\bigl((Ar)_i-a_i\bigr)_+,\\
\min_{-\rho\le q_j\le\rho}
 q_j\bigl(b_j-(Br)_j\bigr)
&=
-\rho\bigl|b_j-(Br)_j\bigr|.
\end{align*}
Hence
\[
\min_{(p,q)\in Q_\rho}\mathcal{L}(r,p,q)
=
W(r)-\rho h(r).
\tag{18}\label{eq:penalty-as-multipliers}
\]
Combining \eqref{eq:exact-penalty} and
\eqref{eq:penalty-as-multipliers},
\[
\text{val}(P)
=
\max_{r\in\Delta_X}
\min_{(p,q)\in Q_\rho}
\mathcal{L}(r,p,q).
\tag{19}\label{eq:max-min-before-sion}
\]

\begin{claim}[\cites{sion1958general} minimax theorem]\label{clm:sion}
Let $C$ and $D$ be nonempty convex subsets of Hausdorff topological
vector spaces, and suppose that $C$ is compact.  Let
$f:C\times D\to\R$ satisfy the following conditions:
\begin{enumerate}[label=\textnormal{(\roman*)}]
\item for every $y\in D$, the function $x\mapsto f(x,y)$ is upper
semicontinuous and quasi-concave on $C$;
\item for every $x\in C$, the function $y\mapsto f(x,y)$ is lower
semicontinuous and quasi-convex on $D$.
\end{enumerate}
Then
\[
\sup_{x\in C}\inf_{y\in D}f(x,y)
=
\inf_{y\in D}\sup_{x\in C}f(x,y).
\tag{20}\label{eq:sion-general}
\]
\end{claim}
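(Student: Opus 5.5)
Since this is the classical minimax theorem of \citet{sion1958general}, I would prove it by the elementary route (in the spirit of Komiya's proof): reduce to finitely many points of $D$ using compactness of $C$, then remove points one at a time with a two-point lemma based on connectedness of a segment.

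\emph{The easy inequality.} $\sup_{x}\inf_{y}f\le\inf_{y}\sup_{x}f$ holds trivially, since $\inf_{y'}f(x,y')\le f(x,y)\le\sup_{x'}f(x',y)$ for all $x,y$. For the reverse inequality, argue by contradiction. Suppose there is $\alpha\in\R$ with
\[
\sup_{x\in C}\inf_{y\in D}f(x,y)<\alpha<\inf_{y\in D}\sup_{x\in C}f(x,y).
\]

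\emph{Reduction to finitely many points.} For every $x\in C$ there is $y$ with $f(x,y)<\alpha$. Hence the sets $U_y:=\{x\in C: f(x,y)<\alpha\}$ cover $C$, and each $U_y$ is open by upper semicontinuity in $x$. Compactness of $C$ gives $y_1,\dots,y_n\in D$ with $\min_i f(x,y_i)<\alpha$ for all $x\in C$. Meanwhile $\sup_x f(x,y)>\alpha$ for every $y\in D$. I would derive a contradiction by induction on $n$, and the inductive step rests on the following lemma.

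\emph{Two-point lemma.} If $y_1,y_2\in D$, $\beta\in\R$, and $\min\{f(x,y_1),f(x,y_2)\}<\beta$ for all $x\in C$, then there is $z\in[y_1,y_2]$ with $f(x,z)<\beta$ for all $x$.

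To prove it, first pick $\beta'<\beta$ such that the same strict inequality holds with $\beta'$. This is possible because $x\mapsto\min\{f(x,y_1),f(x,y_2)\}$ is upper semicontinuous on the compact set $C$, so it attains its maximum, which is below $\beta$. Now suppose, for contradiction, that $\sup_x f(x,z)\ge\beta$ for every $z=z_\lambda:=(1-\lambda)y_1+\lambda y_2$. For each such $z$, define $C_z:=\{x: f(x,z)>\beta'\}$; then $C_z\ne\emptyset$. By quasi-convexity in $y$, $f(x,z)\le\max\{f(x,y_1),f(x,y_2)\}$, so every $x\in C_z$ lies in exactly one of the disjoint sets $C_{y_1}$ and $C_{y_2}$. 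By quasi-concavity in $x$, each $C_z$ is convex. Since $C_z$ is convex and $C_z\subseteq C_{y_1}\cup C_{y_2}$ with $C_{y_1}\cap C_{y_2}=\emptyset$, I would show that $C_z$ lies entirely in one of them, using that $C_{y_1}$ and $C_{y_2}$ are open relative to $C$ where needed. Let $I_j:=\{\lambda\in[0,1]: C_{z_\lambda}\subseteq C_{y_j}\}$. These two sets are disjoint, cover $[0,1]$, and are nonempty ($0\in I_1$, $1\in I_2$). Both should be open: if $\lambda\in I_1$, pick $x$ with $f(x,z_\lambda)>\beta$; lower semicontinuity in $y$ gives $f(x,z_{\lambda'})>\beta'$ for $\lambda'$ near $\lambda$, so $x\in C_{z_{\lambda'}}\cap C_{y_1}$ and therefore $\lambda'\in I_1$. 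This contradicts the connectedness of $[0,1]$.

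\emph{Induction.} Restrict $f$ to the compact convex set $C':=\{x\in C: f(x,y_n)\ge\alpha\}$. On $C'$ we have $\min_{i<n}f(x,y_i)<\alpha$. If $C'=\emptyset$, then $\sup_x f(x,y_n)\le\alpha$, which is already a contradiction. Otherwise, apply the inductive hypothesis on $C'$ to get $\sup_{x\in C'}\inf_{y}f<\alpha$, where the infimum ranges over the segments generated. Concretely, the claim to prove by induction is: for every compact convex $C$, if $\min_{i\le n}f(x,y_i)<\alpha$ on $C$, then some $z\in D$ has $\sup_{x\in C}f(x,z)<\alpha$. Using the inductive claim on $C'$ with the first $n-1$ points yields $z_0$ with $f(x,z_0)<\alpha$ on $C'$. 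Then $\min\{f(x,z_0),f(x,y_n)\}<\alpha$ on all of $C$, and the two-point lemma produces $z$ with $\sup_x f(x,z)<\alpha$. This contradicts $\alpha<\inf_y\sup_x f$.

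The main obstacle is the two-point lemma, and specifically the openness of $I_1$ and $I_2$. The semicontinuity directions must be matched carefully: upper semicontinuity in $x$ is used to obtain $\beta'$, and lower semicontinuity in $y$ is used for openness of the $I_j$. The convexity argument showing that $C_z$ lies in a single one of the disjoint sets $C_{y_1},C_{y_2}$ is the other delicate point.
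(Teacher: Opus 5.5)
The paper gives no proof of this claim; it simply cites \citet{sion1958general}, and in its application the function is in any case continuous, concave in $r$ and affine in $(p,q)$ on compact convex subsets of Euclidean space. Your Komiya-style route is the standard elementary proof: a finite reduction by compactness, induction on the number of points, and a two-point lemma via connectedness of a segment. Your reduction and induction steps are fine. The two-point lemma, however, rests on a false topological claim.

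Upper semicontinuity of $f(\cdot,y_j)$ makes $\{x: f(x,y_j)<c\}$ open and $\{x: f(x,y_j)\ge c\}$ closed. It does not make the strict superlevel sets $C_{y_j}=\{x: f(x,y_j)>\beta'\}$ open relative to $C$. For instance, on $C=[0,1]$ the function $\mathbf{1}_{[1/2,1]}$ is upper semicontinuous and quasi-concave, yet $\{\mathbf{1}_{[1/2,1]}>1/2\}=[1/2,1]$ is not relatively open. Some topological input is genuinely needed here, because a convex set can be a disjoint union of two convex sets, e.g. $[0,1]=[0,\tfrac12)\cup[\tfrac12,1]$. So the dichotomy ``$C_z\subseteq C_{y_1}$ or $C_z\subseteq C_{y_2}$'' does not follow from what you wrote.

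The repair uses closedness rather than openness, and this is where your strict choice of $\beta'$ pays off. Put $A_j:=\{x\in C: f(x,y_j)\ge\beta'\}$. These sets are closed by upper semicontinuity. They are disjoint because $\min\{f(x,y_1),f(x,y_2)\}<\beta'$ for every $x$. Moreover $C_{y_j}\subseteq A_j$, so $C_z\subseteq A_1\cup A_2$. Since $C_z$ is convex, it is path-connected through segments. The relatively closed sets $C_z\cap A_1$ and $C_z\cap A_2$ are disjoint and cover $C_z$, so one of them is empty. If, say, $C_z\cap A_2=\emptyset$, then $C_z\cap C_{y_2}=\emptyset$, hence $C_z\subseteq C_{y_1}$.

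There is also a smaller slip in your openness argument for $I_1$. You cannot in general pick $x$ with $f(x,z_\lambda)>\beta$, since $\max_x f(x,z_\lambda)$ may equal $\beta$ exactly. Any $x\in C_{z_\lambda}$ works instead: lower semicontinuity of $\lambda'\mapsto f(x,z_{\lambda'})$ keeps $f(x,z_{\lambda'})>\beta'$ for $\lambda'$ near $\lambda$. Then $x\in C_{z_{\lambda'}}\cap C_{y_1}$, which forces $\lambda'\in I_1$.

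With these two fixes, your argument is a complete, self-contained proof of the claim, which the paper itself does not supply.
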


We verify every prerequisite of Claim~\ref{clm:sion}.  Take
\[
C=\Delta_X,
\qquad
D=Q_\rho,
\qquad
f=\mathcal{L}.
\]
Both $\Delta_X$ and $Q_\rho$ are nonempty, compact, convex subsets of
finite-dimensional Euclidean spaces, which are Hausdorff topological
vector spaces.  For each fixed $(p,q)\in Q_\rho$, the map
$r\mapsto\mathcal{L}(r,p,q)$ is continuous because $W$ is Lipschitz,
and it is concave because $W$ is concave and the remaining terms are
linear in $r$.  It is therefore upper semicontinuous and
quasi-concave.  For each fixed $r\in\Delta_X$, the map
$(p,q)\mapsto\mathcal{L}(r,p,q)$ is affine and continuous.  It is
therefore lower semicontinuous and quasi-convex.  Thus all hypotheses
of Claim~\ref{clm:sion} hold.  Since $\Delta_X$ and $Q_\rho$ are compact and
$\mathcal{L}$ is continuous, the suprema and infima in this application
are attained.  Therefore,
\[
\max_{r\in\Delta_X}
\min_{(p,q)\in Q_\rho}\mathcal{L}(r,p,q)
=
\min_{(p,q)\in Q_\rho}
\max_{r\in\Delta_X}\mathcal{L}(r,p,q).
\tag{21}\label{eq:sion-application}
\]
Combining \eqref{eq:max-min-before-sion} and
\eqref{eq:sion-application},
\[
\text{val(P)}
=
\min_{(p,q)\in Q_\rho}
\max_{r\in\Delta_X}\mathcal{L}(r,p,q).
\tag{22}\label{eq:after-sion}
\]

It remains to identify the inner maximum.  Fix
$(p,q)\in Q_\rho$, and write
\[
g_x(t)
:=
\Phi(t,x)-A_x\cdot p-B_x\cdot q, 
\]
for each $x\in X$. 
Using the definition of $W$ and the fact that every matching with
$T$-marginal $\tau$ has some $X$-marginal in $\Delta_X$, we obtain
\begin{align*}
\max_{r\in\Delta_X}\mathcal{L}(r,p,q)
&=
a\cdot p+b\cdot q\\
&\quad+
\max_{r\in\Delta_X}
\left\{
W(r)-\sum_{x\in X}r_x(A_x\cdot p+B_x\cdot q)
\right\}\\
&=
a\cdot p+b\cdot q
+
\max_{\pi\in\Pi:\,\pi^T=\tau}
\int_{T\times X}g_x(t)\,d\pi(t,x).
\tag{23}\label{eq:inner-max-matching}
\end{align*}
We remark that the maximum in the last line is attainable by choosing a matching $\pi\in W(r^*)$, where $r^*$ is a solution to $\max W(r)$.\footnote{As noted after (\ref{eq:W-def}), for any $r\in \Delta_X$, there is a solution to $\max W(r)$.}  
For every matching in the last line,
\[
\int_{T\times X}g_x(t)\,d\pi(t,x)
\le
\int_T\max_{x\in X}g_x(t)\,d\tau(t).
\tag{24}\label{eq:pointwise-upper}
\]
Equality in \eqref{eq:pointwise-upper} is attainable.  To see this
formally, enumerate $X=\{1,\ldots,n\}$ and choose the smallest-index
maximizer.  Specifically, define
\[
E_i
:=
\left(\bigcap_{j=1}^n\{t:g_i(t)\ge g_j(t)\}\right)
\cap
\left(\bigcap_{j<i}\{t:g_i(t)>g_j(t)\}\right).
\]
Each $E_i$ is Borel because the functions $g_i-g_j$ are continuous,
and the sets $E_1,\ldots,E_n$ form a measurable partition of $T$.
Define $\widehat x(t)=i$ for $t\in E_i$.  Then $\widehat x$ is Borel
measurable and
\[
g_{\widehat x(t)}(t)=\max_{x\in X}g_x(t)
\qquad\textnormal{for every }t\in T.
\]
We define $\hat{\pi}\in \Pi$ by 
\[
\widehat\pi(E)
=
\tau\bigl(
\{t\in T:(t,\widehat x(t))\in E\}
\bigr), 
\]
for every Borel subset $E\subseteq T\times X$. 
Then, $\widehat\pi$ has $T$-marginal $\tau$ and attains equality in
\eqref{eq:pointwise-upper}.  Therefore,
\begin{align*}
\max_{r\in\Delta_X}\mathcal{L}(r,p,q)
&=
\int_T
\max_{x\in X}
\bigl\{\Phi(t,x)-A_x\cdot p-B_x\cdot q\bigr\}
\,d\tau(t)+a\cdot p+b\cdot q\\
&=F(p,q), 
\tag{25}\label{eq:inner-max-F}
\end{align*}
where $F$ is the same function as that defined in (\ref{eq-market-potential}). 
Substituting \eqref{eq:inner-max-F} into \eqref{eq:after-sion},
\[
\text{val(P)}=\min_{(p,q)\in Q_\rho}F(p,q).
\tag{26}\label{eq:F-min-compact}
\]

Since $Q_\rho$ is compact, there exists
$(p^*,q^*)\in Q_\rho$ satisfying
\[
F(p^*,q^*)=\text{val(P)}.
\tag{27}\label{eq:F-optimum}
\]
Then, we obtain 
\begin{align*}
F(p^*,q^*)=\text{val(P)}\leq \text{val(D)}=\inf_{(p,q)\in\mathcal{K}} F(p,q), 
\end{align*}
where the inequality is the weak duality established in (\ref{sect-proof-step1}) and the second equality follows from Lemma~\ref{lem:reduce}. Again by Lemma~\ref{lem:reduce}, 
$(u_{p^*,q^*},p^*,q^*)$ is a solution to the dual problem. 
\qed

\clearpage
\appendix
\section{Online Appendix}
\label{app:online}

\subsection{Equivalence of the two equilibrium formulations}
\label{appsec:AWW-equivalence}
We show that the definition of competitive equilibrium given in Definition \ref{def-equil-2} is equivalent to the definition in \citet{AzevedoWeylWhite2013}. 

Let $\Delta(X)$ denote the set of probability
distributions over $X$. 
An \textbf{allocation} is a measurable map
$\mathbf{x}:T\to\Delta(X)$. Its induced expected demand is
\[
\widetilde{\mathbf{x}}(t)
:=
\sum_{x\in X}\mathbf{x}(t)_x x
\in[0,1]^G.
\]
Recall that, for $p\in\mathbb{R}^G$ and $t\in T$, 
\begin{align}
D(p,t)
:=
\argmax_{x\in X}\{\Phi(t,x)-p^\top x\}.
\label{eq:demand-2}
\end{align}
Let $\overline{D}(p,t)\subseteq\Delta(X)$ be the set of lotteries whose
supports are contained in $D(p,t)$. 

\citet{AzevedoWeylWhite2013} use the following equilibrium
notion.
\begin{definition}\label{def-equil-1}
{\normalfont
A \textbf{competitive equilibrium} is a price--allocation pair
$(p,\mathbf{x})$ such that
\[
\int_T\widetilde{\mathbf{x}}(t)\,d\tau(t)=s
\quad\text{and}\quad
\mathbf{x}(t)\in\overline{D}(p,t)
\quad\text{for $\tau$-a.e. }t.
\]
}
\end{definition}

Definition~\ref{def-equil-1} represents an assignment by a measurable map
from types to lotteries over bundles, whereas
Definition~\ref{def-equil-2} represents it by a probability measure on
$T\times X$. Because $X$ is finite, the two representations are equivalent.

\begin{proposition}\label{prop:AWW-formulations-equivalent}
The following statements hold.
\begin{enumerate}
\item Every allocation $\mathbf{x}:T\to\Delta(X)$ induces a unique Borel
probability measure $\pi_{\mathbf{x}}$ on $T\times X$ satisfying
\begin{align}
\pi_{\mathbf{x}}(V\times\{x\})
=
\int_V\mathbf{x}(t)_x\,d\tau(t)
\label{eq:allocation-to-measure}
\end{align}
for every Borel set $V\subseteq T$ and every $x\in X$. Moreover,
$\pi_{\mathbf{x}}^T=\tau$.

\item Conversely, every Borel probability measure $\pi$ on $T\times X$
with $\pi^T=\tau$ is induced by an allocation $\mathbf{x}_\pi$ through
\eqref{eq:allocation-to-measure}. This allocation is unique up to
$\tau$-a.e. equality.

\item Under this correspondence,
\[
B\pi_{\mathbf{x}}^X
=
\int_T\widetilde{\mathbf{x}}(t)\,d\tau(t),
\]
and, for every $p\in\mathbb{R}^G$,
\[
x\in D(p,t)
\quad\text{for $\pi_{\mathbf{x}}$-a.e. }(t,x)
\]
if and only if
\[
\mathbf{x}(t)\in\overline{D}(p,t)
\quad\text{for $\tau$-a.e. }t.
\]
Consequently, Definitions~\ref{def-equil-1} and~\ref{def-equil-2} are
equivalent.
\end{enumerate}
\end{proposition}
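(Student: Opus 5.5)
For part 1, I would define $\pi_{\mathbf{x}}$ directly by the formula $\pi_{\mathbf{x}}(E):=\sum_{x\in X}\int_T \mathbf{1}_E(t,x)\,\mathbf{x}(t)_x\,d\tau(t)$ for Borel $E\subseteq T\times X$. Because $X$ is finite and discrete, every Borel subset of $T\times X$ is a finite disjoint union $\bigcup_{x}E_x\times\{x\}$ with $E_x$ Borel in $T$. Hence this formula is well defined, and countable additivity follows from monotone convergence applied to each of the finitely many summands. The total mass is $\int_T\sum_x\mathbf{x}(t)_x\,d\tau=1$, and the same computation with $E=V\times X$ gives $\pi^T_{\mathbf{x}}=\tau$. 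Uniqueness is immediate, since the sets $V\times\{x\}$ generate the Borel $\sigma$-algebra, are closed under finite intersection, and \eqref{eq:allocation-to-measure} fixes $\pi_{\mathbf{x}}$ on them. Alternatively, uniqueness follows directly from the decomposition above.

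For part 2, I would fix $\pi$ with $\pi^T=\tau$ and, for each $x$, consider the measure $\nu_x(V):=\pi(V\times\{x\})$ on $T$. Since $\nu_x\le\pi^T=\tau$, it is absolutely continuous with respect to $\tau$. The Radon--Nikodym theorem then gives a density $f_x\ge0$ that is unique $\tau$-a.e. Summing over $x$ gives $\sum_x f_x=1$ $\tau$-a.e., because $\sum_x\nu_x=\tau$, and each $f_x\le 1$ a.e. On the exceptional $\tau$-null set where the densities are negative, exceed one, or fail to sum to one, I would redefine $\mathbf{x}_\pi(t)$ to be a fixed point mass (say on $\mathbf{0}$). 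Elsewhere I set $\mathbf{x}_\pi(t)_x=f_x(t)$. This yields a measurable map into $\Delta(X)$ that induces $\pi$ via \eqref{eq:allocation-to-measure}, and uniqueness a.e.\ is exactly Radon--Nikodym uniqueness. This null-set bookkeeping is the only mildly delicate step.

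For part 3, the aggregate identity is a computation: $(B\pi_{\mathbf{x}}^X)_g=\sum_x x_g\,\pi_{\mathbf{x}}(T\times\{x\})=\int_T\sum_x x_g\,\mathbf{x}(t)_x\,d\tau=\int_T\widetilde{\mathbf{x}}(t)_g\,d\tau$. For the demand condition, I would let $N:=\{(t,x):x\notin D(p,t)\}$. This set is Borel, since each slice $\{t:x\notin D(p,t)\}$ is a finite union of sets of the form $\{t:\Phi(t,y)-p^\top y>\Phi(t,x)-p^\top x\}$, which are open by continuity. Then $\pi_{\mathbf{x}}(N)=\int_T\sum_{x}\mathbf{1}\{x\notin D(p,t)\}\,\mathbf{x}(t)_x\,d\tau(t)$. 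The integrand is nonnegative, so this vanishes if and only if, for $\tau$-a.e.\ $t$, $\mathbf{x}(t)$ puts no mass outside $D(p,t)$, that is, $\mathbf{x}(t)\in\overline{D}(p,t)$.

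Equivalence of the two definitions then follows by combining parts 1--3. Given an equilibrium $(p,\mathbf{x})$ in the sense of Definition~\ref{def-equil-1}, the pair $(\pi_{\mathbf{x}},p)$ satisfies Definition~\ref{def-equil-2}. Conversely, given $(\pi,p)$ satisfying Definition~\ref{def-equil-2}, the pair $(p,\mathbf{x}_\pi)$ satisfies Definition~\ref{def-equil-1}, using $\pi_{\mathbf{x}_\pi}=\pi$.
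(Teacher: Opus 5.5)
Your proposal is correct and follows essentially the same route as the paper: the explicit formula for $\pi_{\mathbf{x}}$, Radon--Nikodym densities of the slice measures $V\mapsto\pi(V\times\{x\})$ with redefinition on a $\tau$-null set, the direct computation for market clearing, and the nonnegativity argument on the Borel set $N=\{(t,x):x\notin D(p,t)\}$. Your added details only make explicit what the paper compresses into ``because $X$ is finite'': the slice-decomposition or $\pi$--$\lambda$ argument for uniqueness, and the open-set justification that $N$ is Borel.
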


\begin{proof}
Let $\mathbf{x}:T\to\Delta(X)$ be an allocation. For every Borel set
$E\subseteq T\times X$, define
\[
\pi_{\mathbf{x}}(E)
:=
\sum_{x\in X}
\int_T
\mathbf{1}_{E}(t,x)\mathbf{x}(t)_x\,d\tau(t).
\]
Because $X$ is finite, this defines a Borel probability measure on
$T\times X$. For every Borel set $V\subseteq T$,
\[
\pi_{\mathbf{x}}(V\times X)
=
\int_V\sum_{x\in X}\mathbf{x}(t)_x\,d\tau(t)
=
\tau(V),
\]
so $\pi_{\mathbf{x}}^T=\tau$. Equation
\eqref{eq:allocation-to-measure} follows directly from the definition,
and it uniquely determines $\pi_{\mathbf{x}}$ because $X$ is finite.

Conversely, let $\pi$ be a Borel probability measure on $T\times X$ with
$\pi^T=\tau$. For each $x\in X$, define a finite Borel measure $\pi_x$ on
$T$ by
\[
\pi_x(V):=\pi(V\times\{x\}).
\]
If $\tau(V)=0$, then
\[
0
=
\pi^T(V)
=
\sum_{y\in X}\pi_y(V),
\]
and hence $\pi_x(V)=0$ for every $x\in X$. Thus $\pi_x\ll\tau$. By the
Radon--Nikodym theorem, there exists a measurable function
$\mathbf{x}_\pi(\cdot)_x:T\to\mathbb{R}_{\geq0}$ such that
\[
\pi_x(V)
=
\int_V\mathbf{x}_\pi(t)_x\,d\tau(t)
\qquad\text{for every Borel }V\subseteq T.
\]
Moreover, for every Borel set $V\subseteq T$,
\[
\int_V\sum_{x\in X}\mathbf{x}_\pi(t)_x\,d\tau(t)
=
\sum_{x\in X}\pi_x(V)
=
\pi^T(V)
=
\tau(V).
\]
It follows that
$\sum_{x\in X}\mathbf{x}_\pi(t)_x=1$ for $\tau$-a.e. $t$.
After redefining $\mathbf{x}_\pi$ on a null set, if necessary, we obtain a
measurable map from $T$ to $\Delta(X)$. The Radon--Nikodym derivatives are
unique $\tau$-a.e., proving uniqueness of the allocation up to null sets.
The identities above also imply $\pi_{\mathbf{x}_\pi}=\pi$.

Since the column of $B$ indexed by $x$ is $x$, we have
\begin{align*}
B\pi_{\mathbf{x}}^X
&=
\sum_{x\in X}x\,\pi_{\mathbf{x}}(T\times\{x\})\\
&=
\sum_{x\in X}x\int_T\mathbf{x}(t)_x\,d\tau(t)\\
&=
\int_T\sum_{x\in X}\mathbf{x}(t)_x x\,d\tau(t)
=
\int_T\widetilde{\mathbf{x}}(t)\,d\tau(t).
\end{align*}
Thus market clearing is identical in the two representations.

Finally, fix $p\in\mathbb{R}^G$. Because $X$ is finite and
$\Phi(\cdot,x)$ is continuous for every $x$, the set
\[
N_p
:=
\{(t,x)\in T\times X:x\notin D(p,t)\}
\]
is Borel. By the definition of $\pi_{\mathbf{x}}$,
\[
\pi_{\mathbf{x}}(N_p)
=
\sum_{x\in X}
\int_{\{t:\,x\notin D(p,t)\}}
\mathbf{x}(t)_x\,d\tau(t).
\]
All terms on the right-hand side are nonnegative. Hence
$\pi_{\mathbf{x}}(N_p)=0$ if and only if, for every $x\in X$,
$\mathbf{x}(t)_x=0$ for $\tau$-a.e. $t$ such that
$x\notin D(p,t)$. Since $X$ is finite, this is equivalent to
$\mathbf{x}(t)\in\overline{D}(p,t)$ for $\tau$-a.e. $t$. Combining this
observation with the market-clearing identity proves the equivalence of
Definitions~\ref{def-equil-1} and~\ref{def-equil-2}.
\end{proof}

\subsection{Failure of $L^1$-compactness in the original existence proof}
\label{appsec:AWW-gap}

In proving that the aggregate-demand correspondence has a closed graph,
\citet[p.~286]{AzevedoWeylWhite2013} invoke the claim that the set of
allocations is compact under the $L^1$ norm and therefore that a convergent
subsequence of allocations may be selected. We show that this compactness
claim is false, even when the type distribution has compact support.

Let
\[
\mathcal{A}
:=
\bigl\{
\mathbf{x}:T\to\Delta(X)
\mathrel{\big|}
\mathbf{x}\text{ is measurable}
\bigr\}
\]
be the set of allocations, endowed with the $L^1$ metric
\[
\|\mathbf{x}-\mathbf{y}\|_1
:=
\sum_{z\in X}
\int_T
\bigl|\mathbf{x}(t)_z-\mathbf{y}(t)_z\bigr|\,d\tau(t).
\]
For each $p\in\mathbb{R}^G$, define aggregate demand by
\[
\mathcal{D}(p)
:=
\left\{
\int_T\widetilde{\mathbf{x}}(t)\,d\tau(t)
\ \middle|\
\mathbf{x}\in\mathcal{A},\ 
\mathbf{x}(t)\in\overline{D}(p,t)
\text{ for $\tau$-a.e. }t
\right\}.
\]

\begin{proposition}\label{prop:AWW-L1-noncompact}
There exist sequences $p^n\to p$ and $d^n\to d$, together with allocations
$\mathbf{x}^n\in\mathcal{A}$ satisfying
\[
d^n
:=
\int_T\widetilde{\mathbf{x}}^n(t)\,d\tau(t),
\qquad
d^n\in\mathcal{D}(p^n)
\quad\text{for every }n,
\]
such that $\{\mathbf{x}^n\}_{n\geq1}$ has no $L^1$-convergent
subsequence. Consequently, $\mathcal{A}$ is not compact under the $L^1$
metric.
\end{proposition}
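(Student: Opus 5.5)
We will construct an explicit counterexample with a single good and a uniform type distribution. The aim is a sequence of prices converging to a limit price at which every type is indifferent, while the demand selections along the sequence oscillate in type space like Rademacher functions.

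Take $G=\{g\}$, so $X=\{0,1\}$ (writing $0$ and $1$ for the empty bundle and the single good). Let $T=[0,1]$, embedded in $\mathbb{R}^{X\setminus\{\mathbf{0}\}}=\mathbb{R}$, so the type $t$ is the valuation $t_1$ of the good. The obstacle is that a single-good demand is monotone in $t$: a type buys iff $t>p$. Pure-demand selections are therefore threshold functions, and threshold functions do have $L^1$-convergent subsequences. To produce oscillation we need indifference on a set of positive measure.

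The first idea is to let $\tau$ be a point mass $\delta_{1/2}$ and take $p^n=p=1/2$. Then every type is indifferent, and any allocation is a demand selection. But $L^1(\delta_{1/2})$ is essentially finite-dimensional, so this gives no counterexample.

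The remedy is to decouple the type space from the valuation. Let $T=[0,1]\times\{1/2\}\subset\mathbb{R}^{X\setminus\{\mathbf{0}\}}$ when we allow two goods, or more simply take two bundles carrying identical valuations. Concretely, take $G=\{g_1,g_2\}$ and $T=\{(t_{10},t_{01},t_{11})=(1,1,0)+\theta(0,0,0)\}$. This would again be a single point, so it does not work as stated.

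The cleanest fix is to give types an irrelevant extra dimension. Take $T=[0,1]^2\subseteq\mathbb{R}^{X\setminus\{\mathbf{0}\}}$ with $|G|=2$, and choose $\tau$ uniform on the segment $\{(t_{10},t_{01},t_{11}):t_{10}=t_{01}=1,\ t_{11}=\theta\in[0,1]\}$. Every type values $\{g_1\}$ and $\{g_2\}$ at $1$ and the full bundle at $\theta\le 1$.

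At $p=(1/2,1/2)$, type $\theta$ gets surplus $1/2$ from each singleton, $\theta-1\le 0$ from the full bundle, and $0$ from the empty bundle. Hence $D(p,t)=\{(1,0),(0,1)\}$ for every $t$.

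Set $p^n:=p$ for all $n$. Define $\mathbf{x}^n(t)=\delta_{(1,0)}$ if $\sin(2^n\pi\theta)\ge0$ and $\delta_{(0,1)}$ otherwise. Each $\mathbf{x}^n$ is measurable and lies in $\overline D(p^n,t)$. Its aggregate demand is $d^n=(1/2,1/2)+O(2^{-n})$, which converges to $d=(1/2,1/2)$. We can make it exactly $(1/2,1/2)$ by using the Rademacher partition into dyadic halves, and then $d^n=d$ for every $n$.

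It remains to check that there is no $L^1$-convergent subsequence. The functions $\mathbf{x}^n(\cdot)_{(1,0)}=\tfrac12(1+r_n(\theta))$, where $r_n$ is the $n$th Rademacher function, satisfy $\int|r_n-r_m|\,d\theta=1$ for $n\ne m$. Therefore $\|\mathbf{x}^n-\mathbf{x}^m\|_1=1$ for all $n\neq m$: the sequence is uniformly separated, so no subsequence is Cauchy. This proves the noncompactness of $\mathcal{A}$.

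The main step to verify carefully is the indifference computation, which forces the full bundle to be strictly worse at price $p$. The separation bound is standard, since $|r_n-r_m|=2$ on exactly half of $[0,1]$. The compactness of $T$ is retained throughout, as promised in the text.
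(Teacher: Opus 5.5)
Your final construction is correct and is essentially the paper's: two goods and the constant price $p^n=(1/2,1/2)$, under which every type on the support segment (your $(1,1,\theta)$, the paper's $(1+r,1+r,1+r)$) strictly prefers each singleton to the empty and full bundles and is indifferent between the two singletons, together with dyadic, Rademacher-type alternation between $\delta_{e_1}$ and $\delta_{e_2}$, which gives $d^n=(1/2,1/2)$ and $\|\mathbf{x}^n-\mathbf{x}^m\|_1=1$ for $n\neq m$. The remaining issues are cosmetic: read $T=[0,1]^2$ as the segment itself, a compact subset of $\mathbb{R}^3$; your sine-based selection already yields $d^n=(1/2,1/2)$ exactly, so the $O(2^{-n})$ hedge is unnecessary; and the discarded exploratory attempts should be deleted.
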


\begin{proof}
Let $G=\{1,2\}$ and
\[
X=\{\mathbf{0},e_1,e_2,e_{12}\},
\qquad
\mathbf{0}:=(0,0),
\]
\[
e_1:=(1,0),
\qquad
e_2:=(0,1),
\qquad
e_{12}:=(1,1).
\]
For each $r\in[0,1]$, define a type $t^r$ by
\[
t^r_{e_1}=t^r_{e_2}=t^r_{e_{12}}:=1+r,
\qquad
t^r_{\mathbf{0}}:=0,
\]
and let
\[
T^{=}:=\{t^r:r\in[0,1]\}.
\]
The set $T^{=}$ is compact. For this example, take $T:=T^{=}$. Let
$\lambda$ denote Lebesgue measure on $[0,1]$, and let $\tau$ be its
pushforward under the map $r\mapsto t^r$. We henceforth identify $T$
with $[0,1]$ through this map.

Set
\[
p^n:=p:=\left(\frac12,\frac12\right)
\qquad\text{for every }n.
\]
For every $r\in[0,1]$,
\[
t^r_{e_1}-p^\top e_1
=
t^r_{e_2}-p^\top e_2
=
\frac12+r
>
r
=
t^r_{e_{12}}-p^\top e_{12},
\]
and $\frac12+r>0=t^r_{\mathbf{0}}-p^\top\mathbf{0}$. Therefore,
\[
D(p,t^r)=\{e_1,e_2\}
\qquad\text{for every }r\in[0,1].
\]
Let $\delta_z\in\Delta(X)$ denote the degenerate lottery on $z\in X$.
For each $n\geq1$, define $\mathbf{x}^n:T^{=}\to\Delta(X)$ by
\[
\mathbf{x}^n(t^r)
:=
\begin{cases}
\delta_{e_1},
& \text{if $\lfloor 2^n r\rfloor$ is odd},\\
\delta_{e_2},
& \text{if $\lfloor 2^n r\rfloor$ is even},
\end{cases}
\qquad r\in[0,1),
\]
and set $\mathbf{x}^n(t^1):=\delta_{e_2}$. Each $\mathbf{x}^n$ is measurable
and assigns only demanded bundles. Moreover, half of the type space is
assigned to each of $e_1$ and $e_2$, so
\[
d^n
:=
\int_T\widetilde{\mathbf{x}}^n(t)\,d\tau(t)
=
\left(\frac12,\frac12\right)
\qquad\text{for every }n.
\]
Thus $(p^n,d^n)$ is a constant, and hence convergent, sequence with
$d^n\in\mathcal{D}(p^n)$.

Fix $m>n$. On each dyadic interval
$[k2^{-n},(k+1)2^{-n})$, the allocation $\mathbf{x}^n$ is constant,
whereas $\mathbf{x}^m$ alternates between $\delta_{e_1}$ and
$\delta_{e_2}$ on the $2^{m-n}$ subintervals of length $2^{-m}$.
Consequently,
\[
\lambda\bigl(
\{r\in[0,1]:\mathbf{x}^m(t^r)\neq\mathbf{x}^n(t^r)\}
\bigr)
=
\frac12.
\]
Whenever the two allocations differ, the $\ell^1(X)$ distance between
the corresponding degenerate lotteries is $2$. Hence
\[
\|\mathbf{x}^m-\mathbf{x}^n\|_1
=
2\cdot\frac12
=
1.
\]
Every subsequence is therefore separated by distance $1$ and cannot be
Cauchy. Since the $L^1$ topology is metrizable, $\mathcal{A}$ cannot be
compact.
\end{proof}

Proposition~\ref{prop:AWW-L1-noncompact} does not contradict the
closed-graph conclusion itself. It shows that the $L^1$-compactness
argument invoked in the original proof does not establish that conclusion.

\bibliography{new-duality}

\end{document}